\def\beq{\begin{equation}}
\def\eeq{\end{equation}}
\def\beqa{\begin{eqnarray}}
\def\eeqa{\end{eqnarray}}
\def\beqan{\begin{eqnarray*}}
\def\eeqan{\end{eqnarray*}}
\newcommand\blfootnote[1]{%
  \begingroup
  \renewcommand\thefootnote{}\footnote{#1}%
  \addtocounter{footnote}{-1}%
  \endgroup
}
\newtheorem{Theorem}{Theorem}
\newtheorem{Proposition}{Proposition}
\newtheorem{Corollary}{Corollary}
\theoremstyle{definition}
\def\munderbar#1{\underline{\sbox\tw@{$#1$}\dp\tw@\z@\box\tw@}}
\newcommand{\tran}{^{\text{\sf T}}}
\newcommand{\Abf}{\mathbf{A}}
\newcommand{\tbf}{\mathbf{t}}
\DeclarePairedDelimiter\floor{\lfloor}{\rfloor}
\begin{document}

\title{MIMO Networks with One-Bit ADCs: Receiver Design and Communication Strategies}
\author{Abbas Khalili$^1$, Farhad Shirani$^2$, Elza Erkip$^1$, Yonina C. Eldar$^3$\\
$^1$Dept. of Electrical and Computer Engineering,
New York University. \\
$^2$Dept. of Electrical and Computer Engineering,
North Dakota State University. \\
$^3$Dept. of Mathematics and Computer Science, Weizmann Institute of Science.}

\maketitle

\begin{abstract}
\blfootnote{This work is supported by NSF grant SpecEES-1824434 and NYU WIRELESS Industrial Affiliates. This work was presented in part at the IEEE International Symposium on Information Theory (ISIT), July 2019.}

High resolution analog to digital converters (ADCs) are conventionally used at the receiver terminals to store an accurate digital representation of the received signal, thereby allowing for reliable decoding of transmitted messages. However, in a wide range of applications, such as communication over millimeter wave and massive multiple-input multiple-output (MIMO) systems, the use of high resolution ADCs is not feasible due to power budget limitations. In the conventional fully digital receiver design, where each receiver antenna is connected to a distinct ADC, reducing the ADC resolution leads to performance loss in terms of achievable rates. One proposed method to mitigate the rate-loss is to use analog linear combiners leading to design of hybrid receivers. Here, the hybrid framework is augmented by the addition of delay elements to allow for temporal analog processing. Two new classes of receivers consisting of delay elements, analog linear combiners, and one-bit ADCs are proposed. The fundamental limits of communication in single and multi-user (uplink and downlink) MIMO systems employing the proposed receivers are investigated. In the high signal to noise ratio regime, it is shown that the proposed receivers achieve the maximum achievable rates among all receivers with the same number of one-bit ADCs.

\end{abstract}

\section{Introduction}

Millimeter wave (mmWave) and massive multiple-input multiple-output (MIMO) have emerged as promising technologies for high data rate communication in 5G wireless networks. These systems use large antenna arrays at both transmitter and receiver terminals to perform beamforming and spatial multiplexing. Nonetheless, application of large antenna arrays results in high energy consumption which is a major obstacle in the implementation of mmWave systems \cite{rangan2014millimeter,walden1999analog,Murmann2015}.

In conventional receiver design, each antenna is connected to a high resolution analog-to-digital converter (ADC) \cite{DigBF1}. Therefore, using a large number of antennas corresponds to using a large number of high resolution ADCs. The power consumption of an ADC is proportional to the number of quantization bins and hence grows exponentially in the number of output bits \cite{walden1999analog,Murmann2015}. The application of high resolution ADCs is especially costly when using channels with large bandwidths since the power consumption of an ADC grows linearly in bandwidth \cite{BR}. As a result, high resolution ADCs become a major source of receiver power consumption in mmWave and massive MIMO systems. One method proposed to address this high power consumption involves using few low resolution ADCs (e.g. one-bit threshold ADCs) \cite{abbas2019MT,abbas2019highsnr,Abbas2020Thr,mo2014channel,MIMO1,mezghani2012capacity,abbasISIT2018,rini2017generalITW,gong2020rf,ioushua2019family,9294107,shlezinger2019hardware,shlezinger2019asymptotic,shlezinger2021deep,alkhateeb2014mimo,Mo2017ADC,dabeer2006limits,mezghani2008analysis,koch2013low,dutta2019case}. Reducing the number of ADCs and their resolutions decreases power consumption, however, it also results in lower transmission rates. This suggests a tradeoff between transmission rate and power consumption which is controlled by the number and resolution of the ADCs at  the receiver.

 \begin{figure}[t]
\centering 
\includegraphics[width=0.6\textwidth,draft=false]{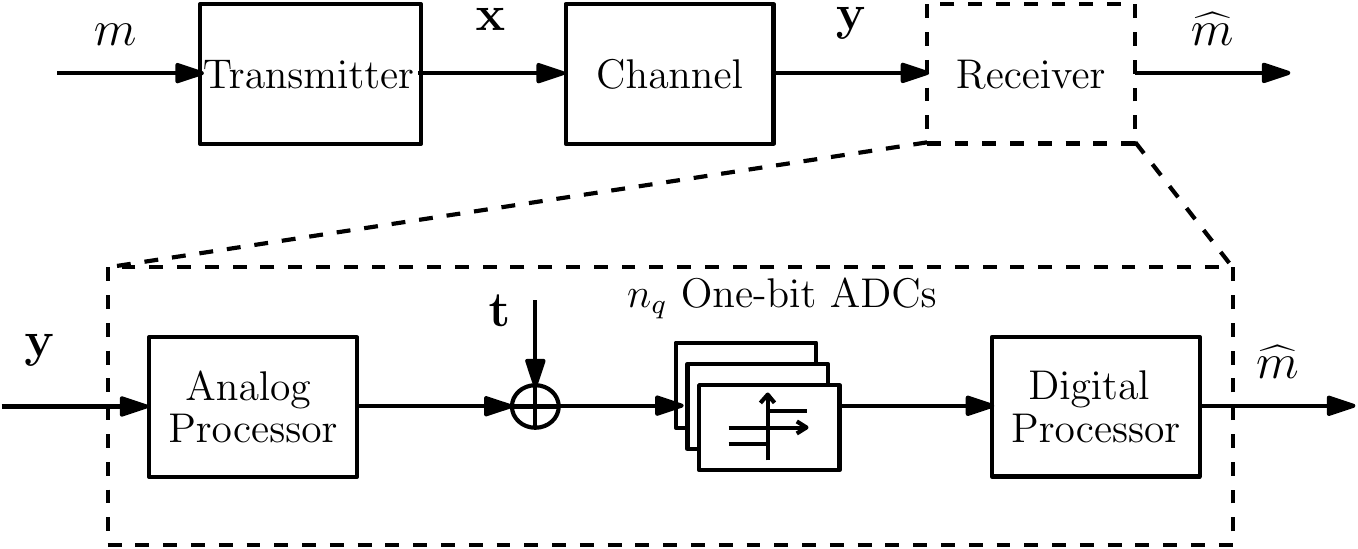}
\caption{A MIMO system with $n_t$ transmit antennas and $n_r$ receive antennas with a hybrid receiver consisting of an analog processor, $n_q$ one-bit ADCs with threshold vector $\mathbf{t}$, and a digital processor.}
\label{fig:classic}
\end{figure}

Consider the MIMO communication scenario shown in Fig. \ref{fig:classic}. For a receiver equipped with $n_q$ one-bit ADCs, the  channel  capacity  is  upper bounded  by  $\min\{n_q, C\}$  bits per channel-use, where $C$ is the capacity of the MIMO channel without any limitations on quantization. In this paper, we focus on linear processing in the analog domain and investigate how close we can get to this upper bound. Linear analog processing of one channel-output at a time (which we call \textit{hybrid one-shot}) is widely considered in the literature \cite{Mo2017ADC,molisch2017hybrid,abbasISIT2018,rini2017generalITW}. It has been shown that the maximum high SNR rate achievable using hybrid one-shot receivers is strictly less than $n_q$  \cite{abbasISIT2018}. By introducing analog temporal processing, we propose two new receivers namely \textit{hybrid blockwise} and \textit{adaptive threshold} receivers that overcome this limitation. In particular, we show that one can achieve $n_q$ bits per channel-use at high SNR. We note that temporal processing is also considered in \cite{shlezinger2019hardware,shlezinger2019asymptotic,shlezinger2021deep} but for quantization rather than channel coding.
 
There is a large body of work on communication schemes when receivers with low resolution ADCs are used. For communication over single-input single-output (SISO) channels, \cite{dabeer2006limits,mezghani2008analysis} show that the capacity achieving transmit signal has a discrete distribution, \cite{koch2013low} shows that ADCs with asymmetric thresholds achieve higher rates than ADCs with symmetric thresholds, and \cite{dutta2020quantization-arxiv} provides a linear model of the system studying the effect of low resolution quantization in linear (e.g., OFDM) systems.  References \cite{mezghani2007ultra,mo2015capacity} consider communication over single-user MIMO channels, where each antenna at the receiver is connected to a one-bit ADC per dimension (real and imaginary). In particular, \cite{mezghani2007ultra} shows that the achievable rate of the quantized MIMO system is approximately $2/\pi$ times less than the system with no quantization constraint at low SNRs. References \cite{alkhateeb2014mimo,Mo2017ADC,rini2017generalITW,abbasISIT2018}
consider receivers which perform linear analog preprocessing prior to quantization. For example, \cite{abbasISIT2018} shows, using geometric interpretations, that non-zero thresholds can improve the capacity of the system.

While much of the literature focuses on single-user communication, the use of low resolution ADCs at the receivers in multiterminal communications gives rise to new challenges in interference management and successive decoding schemes. In \cite{Rassouli2018}, communication over the multiple-access channel (MAC) is studied when each transmitter is equipped with a single antenna and the receiver has a single one-bit  ADC. It is shown that the optimal input distribution is discrete. Uplink communication over wireless networks, modeled as a MAC, is also studied in \cite{Studer2016,Hong2018ADC}, where practical transmission schemes are considered and analyzed when coarse quantization is used at the receiver. References \cite{liang2016mixed,Zhang2016ADC,Zhang2017ADC,Wen2015ADC,Jeon2018ADC,MOllen2017ADC} investigate the performance of massive MIMO systems with low resolution ADCs at the receiver. Finally, \cite{Wen2015ADC,Jeon2018ADC,MOllen2017ADC} consider mixed ADC receivers, where the receiver is connected to a set of low and high resolution ADCs. However, the capacity region of general single and multi-user communications in which the receivers are equipped with a fixed number of ADCs is not known. In particular, whether one can improve the performance by using a more sophisticated receiver design is not well understood.

The setting in this paper is communication over MIMO channels, where a fixed number of one-bit ADCs $n_q$ are used at the receiver. The conventional hybrid one-shot receiver achieves the high SNR rate of $n_q$ bits per channel-use only when $n_q$ is less than the rank of the channel and so this receiver does not achieve the full potential of the ADCs \cite{abbasISIT2018,mo2015capacity}. Our main contribution is to consider temporal processing in the analog domain before quantization to better utilize the ADCs, thereby increasing the communication rate.
Our contributions are summarized as follows:
\begin{itemize}

\item We propose two new MIMO receivers, hybrid blockwise and adaptive threshold, that increase the achievable rate (or the rate region in multi-user settings) compared with the hybrid one-shot receiver. These receivers perform temporal processing of the observed channel outputs in the analog domain. The hybrid blockwise receiver performs temporal processing using a delay network and linear analog combiner while the adaptive threshold receiver uses a delay network and adaptive ADC thresholds.

\item We characterize the high SNR capacity of the hybrid blockwise receiver for single and multi-user MIMO systems as a function of the blocklength (number of the channel-outputs processed jointly in the analog domain). We show analytically that for large blocklengths, this receiver achieves the optimal high SNR capacity/capacity region under some conditions on channel ranks. In addition, our numerical evaluations show that it is possible to get close to the optimal performance even for small blocklengths.

\item For the adaptive threshold receiver, we provide coding schemes for communication over single and multi-user MIMO systems and derive achievable rate regions for arbitrary SNRs. We show analytically that this receiver achieves the optimal high SNR capacity/capacity region irrespective of channel ranks. Through simulations for all SNRs, we observe that the proposed transmission schemes achieve near Shannon capacity in single-user communication and Shannon capacity with TDMA in BC scenarios.

\end{itemize}

The rest of the paper is organized as follows.  Section \ref{sec:System Model} describes the system model and required definitions. 
Sections \ref{sec:arch} and \ref{Sec:FBW} discuss the proposed receivers along with analysis of the achievable rates and rate regions.
Section \ref{sec:numerical_results} presents numerical analysis and simulation results. 
Section \ref{sec:conclusion} concludes the paper.  

Throughout the paper we use the following notation. Sets are denoted by calligraphic letters such as $\mathcal{X}, \mathcal{U}$. The set of natural and real numbers are denoted by $\mathbb{N}$ and $\mathbb{R}$, respectively. 
The set of numbers $\{1,2,\cdots, n\}, n\in \mathbb{N}$ is represented by $[n]$; $\mathbf{H}$ is a matrix, $\mathbf{h}$ is a vector, and $h$ and $H$ are scalars. The subvector $(x_k,x_{k+1},\cdots,x_n)$ is denoted by $\mathbf{x}_k^n$. We write $||\mathbf{x}||_2$ to denote the $L_2$-norm of $\mathbf{x}$ and $\mathbf{H}^{T}$ is the transpose of $\mathbf{H}$. 
The value of $i$ modulo $k$ is represented by $\mathrm{mod}_k(i), i,k\in \mathbb{N}$. The binary entropy function is $\mathrm{\mathrm{h_b}}(x) = -x\log{x} -(1-x)\log(1-x)$. 

\section{System Model and Preliminaries}
\label{sec:System Model}
In this section, we first provide the descriptions of the communication systems considered in this paper. Next, we explain the conventional hybrid one-shot receiver whose performance is used as a benchmark for the proposed receivers. Finally, we provide the required combinatorial background for our results.
\subsection{System Model}
In our derivations, we look at three communication scenarios described below.

\textbf{Single-user MIMO:} A MIMO system where the transmitter and receiver are equipped with $n_t$ and $n_r$ antennas, respectively, and the receiver has $n_q$ one-bit ADCs. Let  $\mathbf{H} \in \mathbb{R}^{n_r\times n_t}$ be the channel gain matrix. The channel input and output are related as 
\begin{equation}
\mathbf{y} = \mathbf{H} \mathbf{x} + \mathbf{n},
\label{eq:channel}
\end{equation}
where $\mathbf{x} \in \mathbb{R}^{n_{t}}$ with $E[|\mathbf{x}|^2] \leq P$ and $\mathbf{y}\in \mathbb{R}^{n_{r}}$ are the vectors representing transmit and received signals, receptively, and $\mathbf{n}$ is an independent noise vector of length $n_r$ consisting of independent and identically distributed (i.i.d.) unit variance and zero mean Gaussian random variables. 

\textbf{MIMO MAC:}
A $n_u$-user MIMO MAC where the receiver is equipped with $n_q$ one-bit ADCs and $n_r$ receive antennas. The transmitters have $n_{t,j}, j\in[n_u]$ transmit antennas. Let $\mathbf{H}_j\in \mathbb{R}^{n_r\times n_{t,j}}$ denote the channel gain matrix between transmitter $j$ and the receiver and $\mathbf{x}_j \in \mathbb{R}^{n_{t,j}}$ represent the channel inputs of user $j$. Then, the inputs and output of the channel are related through
 \begin{align}
\label{eq:MACMIMO} 
    \mathbf{y} = \sum_{j\in[n_u]}\mathbf{H}_j\mathbf{x}_j+\mathbf{n},    
 \end{align}
 where $\mathbf{y} \in \mathbb{R}^{n_{r}}$ is the vector of the received channel output, $E[|\mathbf{x}_j|^2]\leq P_j$, and $\mathbf{n} \in \mathbb{R}^{n_r}$ is an independent noise vector with i.i.d. unit variance and zero mean Gaussian entries.  

\textbf{MIMO BC:}
A $n_u$-user MIMO BC where receiver $j,~j\in [n_u]$ is equipped with $n_{r,j}$ antennas and employs $n_{q,j}$ one-bit ADCs. The transmitter has $n_t$ transmit antennas and the matrix $\mathbf{H}_j \in \mathbb{R}^{n_{r,j}\times n_t}$ represents the channel gain matrix between the transmitter and receiver $j$. Let $\mathbf{x} \in \mathbb{R}^{n_t}$ denote the channel input vector. The channel output at receiver $j$ is given by
  \begin{align}
     \mathbf{y}_j= \mathbf{H}_j\mathbf{x}+ \mathbf{n}_j,~ j\in [n_u],
 \end{align}
 where $\mathbf{y}_j \in \mathbb{R}^{n_{r,j}}$ is the channel output at receiver $j$, $E[|\mathbf{x}|^2]<P$, and $\mathbf{n}_j \in \mathbb{R}^{n_{r,j}}$ is the independent noise vector at receiver $j$ whose elements are i.i.d. Gaussian with unit variance and zero mean.

The channel gain matrices in all the above scenarios are assumed to be fixed during communication and known at both the transmitter and receiver.

We consider a receiver as in Fig.~\ref{fig:classic}, where the channel outputs are first processed in the analog domain. Then, the resulting analog signals are fed to $n_q$ one-bit ADCs with variable thresholds. Our goal is to maximize the rate region for the described communication scenarios under fixed number of antennas and one-bit ADCs. Towards this goal, we introduce two new receivers and optimize their parameters. For our proposed receivers and communication strategies, we build upon the conventional hybrid one-shot receiver. This receiver uses linear analog processing of the received signals before quantization and was proposed to mitigate the rate-loss due to coarse quantization of the channel outputs \cite{abbasISIT2018,rini2017generalITW,mezghani2012capacity}. 

\subsection{Hybrid One-shot Receiver}
\label{sec:AOS}
Consider the receiver depicted in Fig. \ref{fig:lin_comb}, where at each channel-use, an analog linear combiner multiplies the channel output vector $\mathbf{y}$ by a spatial analog processing matrix $\mathbf{V}$ and the resulting streams are input to one-bit ADCs with threshold vector $\mathbf{t}$. Denoting the ADC output with $\mathbf{w}$, we have
\begin{align}
\label{eq:wq}
    \mathbf{w} = {\rm Q}(\mathbf{V}\mathbf{y} + \mathbf{t}),
\end{align}
where ${\rm Q}(\cdot)$ is an element-wise function which returns the sign of each element.

The spatial analog processing matrix $\mathbf{V}$ and threshold vector $\mathbf{t}$ are assumed to be fixed over the transmission block which does not allow for temporal processing of the received signals in the analog domain. Hence, we name this receiver \textit{hybrid one-shot} receiver. The maximum achievable rate of hybrid one-shot receiver, maximized over all input distributions, threshold vectors, and linear combining matrices is denoted by $C_{HOS}$ and refers to the capacity of the hybrid one-shot receiver.

It was shown in prior works that hybrid one-shot communication over single-user MIMO systems with $n_q$ one-bit ADCs at the receiver inflicts a rate-loss due to which in many cases even at high SNR, the maximum achievable rate is strictly less than $n_q$ \cite{abbasISIT2018,mo2015capacity}. In this work, by introducing temporal analog processing and adaptive thresholds, we provide two new receivers which can achieve $n_q$ bits per channel-use at high SNR in scenarios where hybrid one-shot cannot.

\subsection{Combinatorial Background}

For the analysis of our proposed receivers, we utilize a geometric interpretation introduced in \cite{abbasISIT2018,mo2015capacity} which along with the needed combinatorial background, is briefly described next in the context of a single-user MIMO.

Loosely speaking, as SNR is increased, the effect of noise in \eqref{eq:channel} becomes negligible and the channel output becomes almost equal to $\mathbf{H}\mathbf{x}$. In fact, in the absence of noise, the channel output space is $\textrm{span}(\mathbf{H})$ (the subspace spanned by columns of the channel gain matrix $\mathbf{H}$).

Each one-bit ADC can be viewed as a hyperplane that partitions $\textrm{span}(\mathbf{H})$ into two. To elaborate, based on \eqref{eq:wq}, element $i$ of vector $\mathbf{w}$ for a received vector $\mathbf{y} \in \textrm{span}(\mathbf{H})$ is ${w}_i= {\rm Q}(\mathbf{v}_i^T\mathbf{y}+t_i)$, where $\mathbf{v}_i \in \mathbb{R}^{n_r}$ is row $i$ of the combiner matrix $\mathbf{V}$ and $t_i$ is element $i$ of the threshold vector $\mathbf{t}$. From a geometric perspective, the set of points $\mathbf{z} \in \mathbb{R}^{n_r}$ satisfying  $\mathbf{v}_i^T\mathbf{z}+t_i = 0$ is a hyperplane in $\mathbb{R}^{n_r}$ and if $\mathbf{v}_i^T\mathbf{y}+t_i > 0 ~(<0)$, it means that $\mathbf{y}$ is above (below) the hyperplane.

The hyperplanes  $\{\mathbf{v}_i^T\mathbf{z}+t_i = 0| i \in [n_q]\}$ partition $\textrm{span}(\mathbf{H})$ into set of decision regions that result in unique ADC output vectors ${\mathbf{w}}$.
The number of decision regions is the number of messages that can be reliably transmitted at high SNR. Even though there are $2^{n_q}$ possibilities for $\mathbf{w}$, based on the tuple $( \mathbf{H},\mathbf{V},\mathbf{t})$, the number of decision regions might be fewer since some of the output sign vectors are not realizable. For example, let $n_t=n_r=1$, $n_q=2$, $\mathbf{H}=1$,  $\mathbf{V}= (1,1)^T
$, and $\mathbf{t}= (0,0)^T$. Then, there are no points in $\textrm{span}(\mathbf{H}) = \mathbb{R}$ which result in ${\mathbf{w}} = (+1,-1)^T$ or ${\mathbf{w}} = (-1,+1)^T$. Therefore, in order to maximize the high SNR rate, we should choose $(\mathbf{V},\mathbf{t})$ such that the number of decision regions is maximized. Let us denote the set of decision regions in $\textrm{span}(\mathbf{H})$ for a given pair of $( \mathbf{V},\mathbf{t})$ by ${\cal{R}}(\mathbf{H}, \mathbf{V},\mathbf{t})$. The maximum number of decision regions is given by the following Proposition.

\begin{figure}[t]
\centering
\includegraphics[width=0.6\textwidth,draft=false]{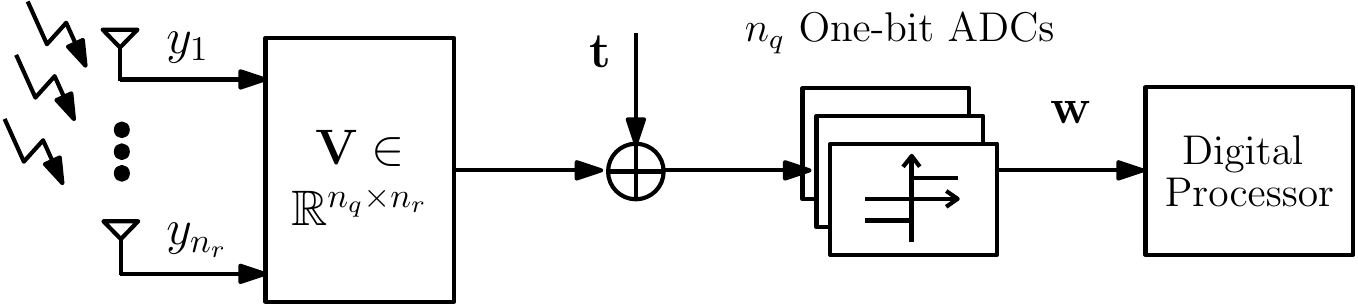}
\caption{A hybrid one-shot receiver, where
the analog linear combiner is characterized by the matrix $\mathbf{V} \in \mathbb{R}^{n_q\times n_r}$, and the ADC thresholds are $\mathbf{t}=(t_1,t_2,\cdots,t_{n_q})$.}
\label{fig:lin_comb}
\end{figure}

\begin{Proposition}(\hspace{-0.002in}\cite{winder1966partitions})
\label{Prop:Partition}
For the hybrid one-shot receiver in Fig.~\ref{fig:lin_comb} characterized with $(\mathbf{V}, \mathbf{t})$, for a channel gain matrix $\mathbf{H}$, the maximum number of decision regions is given by
\begin{align}
\label{eq:M_th}
    \max_{ \mathbf{V} \in \mathbb{R}^{n_q\times n_r},\mathbf{t} \in \mathbb{R}^{n_q}} |{\cal{R}}(\mathbf{H},\mathbf{V}, \mathbf{t})|
    = \sum_{i=0}^{\mathrm{rank}(\mathbf{H})} {n_q \choose i}.
\end{align}
Additionally, if the thresholds are set to zero,
\begin{align}
\label{eq:M_zth}
   \max_{ \mathbf{V} \in \mathbb{R}^{n_q\times n_r}}|{\cal{R}}(\mathbf{H},\mathbf{V},\mathbf{0})|
    = 2\sum_{i=0}^{\mathrm{rank}(\mathbf{H})-1} {n_q-1 \choose i}.
\end{align}
\end{Proposition}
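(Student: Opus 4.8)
The plan is to reduce the statement to the classical count of full-dimensional cells in an arrangement of $n_q$ hyperplanes living in a real space of dimension $r := \mathrm{rank}(\mathbf{H})$, and then to run the standard deletion recursion. First I would fix a matrix $\mathbf{B} \in \mathbb{R}^{n_r \times r}$ whose columns form a basis of $\textrm{span}(\mathbf{H})$, so that each point of $\textrm{span}(\mathbf{H})$ is $\mathbf{B}\mathbf{u}$ for a unique $\mathbf{u} \in \mathbb{R}^r$. Under this identification the $i$-th ADC hyperplane $\{\mathbf{z} : \mathbf{v}_i^T\mathbf{z} + t_i = 0\}$ restricts to the affine hyperplane $\{\mathbf{u} : (\mathbf{B}^T\mathbf{v}_i)^T\mathbf{u} + t_i = 0\}$ in $\mathbb{R}^r$. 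Since $\mathbf{B}$ has full column rank, $\mathbf{v} \mapsto \mathbf{B}^T\mathbf{v}$ maps onto $\mathbb{R}^r$; hence by choosing $\mathbf{v}_i$ and (freely) $t_i$ we can realize an arbitrary affine hyperplane in $\mathbb{R}^r$. This decouples the count from the ambient dimension $n_r$ and simultaneously shows that whatever arrangement we design is achievable.

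With the reduction in place, let $R(n,d)$ be the maximum number of cells cut from $\mathbb{R}^d$ by $n$ affine hyperplanes. The heart of the argument is the recursion obtained by inserting one hyperplane $H$ into an arrangement of the other $n-1$: the number of new cells equals the number of cells into which those $n-1$ hyperplanes partition $H \cong \mathbb{R}^{d-1}$, since each such cell on $H$ slices exactly one old cell in two. This gives $R(n,d) \le R(n-1,d) + R(n-1,d-1)$, with equality when the traces on $H$ are themselves in general position. Combining this with the base cases $R(0,d)=1$ and $R(n,0)=1$ and with Pascal's rule $\binom{n}{i} = \binom{n-1}{i} + \binom{n-1}{i-1}$, an induction on $n+d$ closes the recursion to $R(n,d) = \sum_{i=0}^{d}\binom{n}{i}$; setting $d = r$ and $n = n_q$ yields \eqref{eq:M_th}, the upper bound being attained by a general-position placement, which the realizability argument above guarantees is achievable.

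For the zero-threshold case every restricted hyperplane passes through the origin of $\mathbb{R}^r$, so I would repeat the argument for central arrangements. Writing $C(n,d)$ for the maximal cell count of $n$ central hyperplanes in $\mathbb{R}^d$, the same insertion step gives $C(n,d) \le C(n-1,d) + C(n-1,d-1)$ --- the trace on $H$ is again central, as all hyperplanes contain the origin, which lies in $H$ --- now with base cases $C(1,d)=2$ (one central hyperplane splits $\mathbb{R}^d$ in two) and $C(n,1)=2$ (in $\mathbb{R}^1$ every central hyperplane is the origin). Induction then gives $C(n,d) = 2\sum_{i=0}^{d-1}\binom{n-1}{i}$, which is \eqref{eq:M_zth}; this is equivalently Cover's function-counting identity for the homogeneously separable dichotomies of $n_q$ points in general position in $\mathbb{R}^r$.

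The step I expect to be the crux is the upper-bound direction of the recursion: arguing that no placement of the inserted hyperplane can create more than $R(n-1,d-1)$ (respectively $C(n-1,d-1)$) new cells. This rests on the geometric identity ``new cells $=$ regions that $H$ is cut into by the others,'' together with the induction hypothesis one dimension down, and on checking that degenerate placements (parallel or concurrent traces) can only lower the count --- so that general position is at once optimal and, by the surjectivity of $\mathbf{B}^T$, realizable.
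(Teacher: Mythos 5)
Your proof is correct, but note that the paper itself offers no proof of this proposition: it is quoted directly from Winder's 1966 work on partitions by hyperplanes (the zero-threshold count being Cover's function-counting formula), so there is no in-paper argument to compare against. What you have written is, in substance, the standard deletion--restriction proof of exactly those cited classical results, so your write-up serves as a self-contained substitute for the citation. The one step that is genuinely specific to this paper's setting --- passing from $\textrm{span}(\mathbf{H}) \subset \mathbb{R}^{n_r}$ to an arrangement in $\mathbb{R}^{r}$ with $r = \mathrm{rank}(\mathbf{H})$, and using surjectivity of $\mathbf{v} \mapsto \mathbf{B}^T\mathbf{v}$ to show any affine (resp.\ central) arrangement in $\mathbb{R}^r$ is realizable by some $(\mathbf{V},\mathbf{t})$ (resp.\ $(\mathbf{V},\mathbf{0})$) --- you handle correctly, and it is needed in both directions: for the upper bound (any receiver induces at most $n_q$ hyperplane traces in $\mathbb{R}^r$, with degenerate traces only lowering the count) and for achievability. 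Your recursions and base cases check out: $R(n,d)=R(n-1,d)+R(n-1,d-1)$ with $R(0,d)=R(n,0)=1$ closes via Pascal's rule to $\sum_{i=0}^{d}\binom{n}{i}$, and the central formula $C(n,d)=2\sum_{i=0}^{d-1}\binom{n-1}{i}$ indeed satisfies $C(n,d)=C(n-1,d)+C(n-1,d-1)$ with $C(1,d)=C(n,1)=2$.

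Two assertions you make without justification are easily filled but worth recording. First, in the insertion step, that distinct cells of the induced arrangement on the new hyperplane $H$ split \emph{distinct} old cells: this follows from convexity --- each cell of the $(n-1)$-arrangement is convex, so its intersection with $H$ is convex, hence consists of at most one trace cell, giving the exact identity ``new cells $=$ trace cells'' rather than just an inequality. Second, that equality in the recursion can be achieved simultaneously at every insertion: a generic arrangement (every $\min(d,n)$ of the normals linearly independent, with generic offsets in the affine case) does this, and your realizability reduction guarantees such an arrangement is attainable by an actual $(\mathbf{V},\mathbf{t})$. One further pedantic point: the identification of decision regions with sign vectors is consistent because the set of points with a fixed strict sign pattern is an intersection of open half-spaces, hence convex and connected, so regions and realizable sign vectors are in bijection --- this matches how the paper defines ${\cal{R}}(\mathbf{H},\mathbf{V},\mathbf{t})$.
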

Matrix $\mathbf{V}$ and threshold vector $\mathbf{t}$ that lead to the maximum number of decision regions as in Prop.~\ref{Prop:Partition} can be designed in different ways. For example, if the elements of $\mathbf{V}$ and $\mathbf{t}$ are i.i.d. using an arbitrary density function, with high probability, \eqref{eq:M_th} and \eqref{eq:M_zth} are satisfied. For more discussion on the required conditions on $\mathbf{V}$ and $\mathbf{t}$, we refer the reader to \cite{winder1966partitions}.

\section{ Hybrid Blockwise Receiver}
\label{sec:arch}

Shannon's channel capacity theorem suggests jointly decoding a large block of the channel outputs achieves the channel capacity.
Conventionally, joint processing of the channel outputs is done digitally since it requires non-linear arithmetics. In contrast, our proposed receivers perform temporal processing in analog before quantization. In this section, we discuss the \emph{hybrid blockwise} receiver, which utilizes delay elements and a linear analog combiner to perform analog spatial and temporal processing, thereby generalizing the hybrid one-shot receiver.

In the following, we provide an example of the high SNR rate-loss observed for the hybrid one-shot receiver in a single-user single-input single-output (SISO) scenario. Then, show how to mitigate this rate-loss by performing analog temporal processing using delay elements and a linear combiner. This will motivate our derivations in the rest of this section.
\subsection{Example: Single-User Hybrid Blockwise Receiver}
\label{subsec:ex_hyb}
\begin{figure}[t]
    \centering
    \includegraphics[width=0.8\textwidth,draft = false]{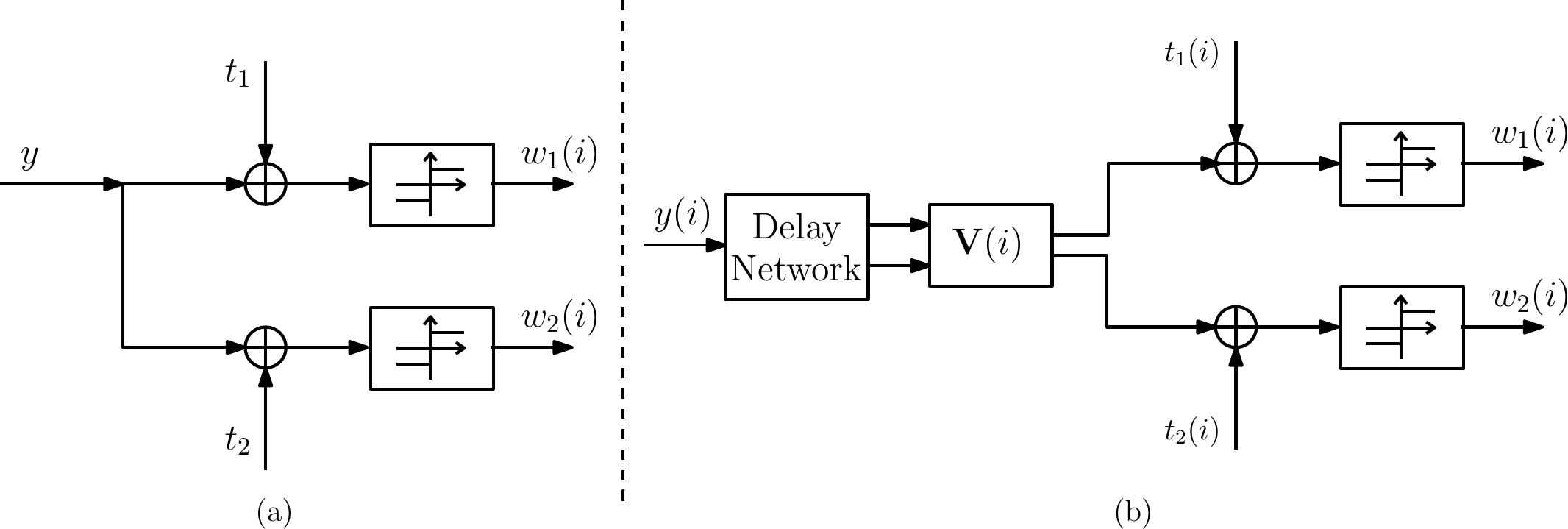}
    \caption{(a) A hybrid one-shot receiver with fixed thresholds. (b) A hybrid blockwise receiver with blocks of length two ($\ell  =2 $), where the receiver operates jointly on two channel uses. }
    \label{fig:exe1_rec}
\end{figure}

Consider a single-user SISO system (i.e. $n_t=n_r=1$), where the receiver is equipped with two one-bit ADCs (i.e. $n_q=2$). Fig. \ref{fig:exe1_rec}a shows a generic hybrid one-shot receiver whose decision regions are plotted in Fig.~\ref{fig:exe1_constel}. Given a channel output $y(i)$, where $i$ refers to the $i^{\rm th}$ channel-use, this receiver produces the digitized signal:
\begin{align}
\begin{aligned}
    ({w}_1(i),{w}_2(i))=
    \begin{cases}
    (-1,-1) \qquad & \text{if } y(i)<t_1,\\
    (+1,-1)&   \text{if } t_1<y(i)<t_2,\\
    (+1,+1)&   \text{if } t_2<y(i),
    \end{cases}
    \end{aligned}
\end{align}
    where we assumed $t_1<t_2$ without loss of generality. Note that the symbol $(-1,+1)$ is not produced at this receiver. At high SNR, the communication noise is negligible and rates close to $\log{3} = 1.585$ bits per channel-use are achievable which is the high SNR capacity of this receiver \cite{abbasISIT2018}.

Next, we show how using the hybrid blockwise receiver shown in Fig. \ref{fig:exe1_rec}b increases the high SNR capacity. Here, the receiver jointly process every two consecutive channel outputs over two channel uses. To elaborate, consider the linear analog combiner and threshold vector pairs 
\begin{align}
\label{eq:vandt}
(\mathbf{V}_{\mathrm{even}}, \mathbf{t}_{\mathrm{even}})\!=\!\! \left(\begin{bmatrix}
         1 &0\\
         0 &1
    \end{bmatrix}\!,\!\begin{bmatrix}
         0.5 \\
         0.5 
    \end{bmatrix}\right)\!,~~
    (\mathbf{V}_{\mathrm{odd}}, \mathbf{t}_{\mathrm{odd}})\!=\!\! \left(\begin{bmatrix}
         \cos(\pi/4) & \sin(\pi/4)\\
         \cos(3\pi/4) & \sin(3\pi/4)
    \end{bmatrix}\!,\!\begin{bmatrix}
         -0.5 \\
         -0.5 
    \end{bmatrix} \right)\!,
\end{align}
and let us consider the first four channel uses. In the first two channel uses, the receiver is idle. In the third and fourth channel uses, the receiver processes $y(1)$ and $y(2)$ using the linear analog combiner and threshold vector pairs $(\mathbf{V}_{\mathrm{odd}},\tbf_{\mathrm{odd}})$ and  $(\mathbf{V}_{\mathrm{even}},\tbf_{\mathrm{even}})$ from \eqref{eq:vandt}, respectively. By the fifth channel-use the receiver has observed two new channel outputs (i.e., $y(3)$ and $y(4)$) and repeats the process. 
As a result, the number of decision regions as shown in Fig.~\ref{fig:exe1_constel} increases compared to the hybrid one-shot receiver.
At high SNR, since the noise is negligible, this receiver achieves the maximum rate of $\frac{\log{11}}{2} = 1.7297$ bits per channel-use (the factor $2$ is in the denominator because it takes two channel uses to determine each decision region) which is greater than the high SNR rate of the hybrid one-shot receiver with two one-bit ADCs.

\begin{figure}[t]
    \centering
    \includegraphics[width=0.4\textwidth,draft = false]{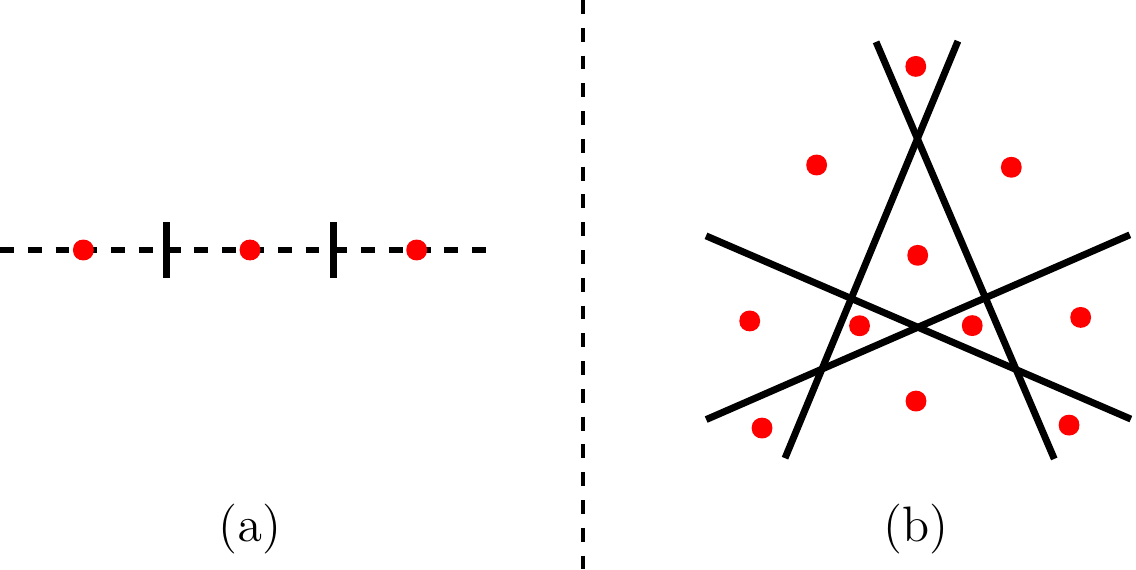}
    \caption{Decision regions for the (a) hybrid one-shot receiver shown in Fig.~\ref{fig:exe1_rec}a and (b) hybrid blockwise receivers shown in Fig.~\ref{fig:exe1_rec}b, where red dots represent possible transmitted constellations.}
    \label{fig:exe1_constel}
\end{figure} 

\begin{figure}[b!]
\centering
\includegraphics[width=0.7\textwidth,draft=false]{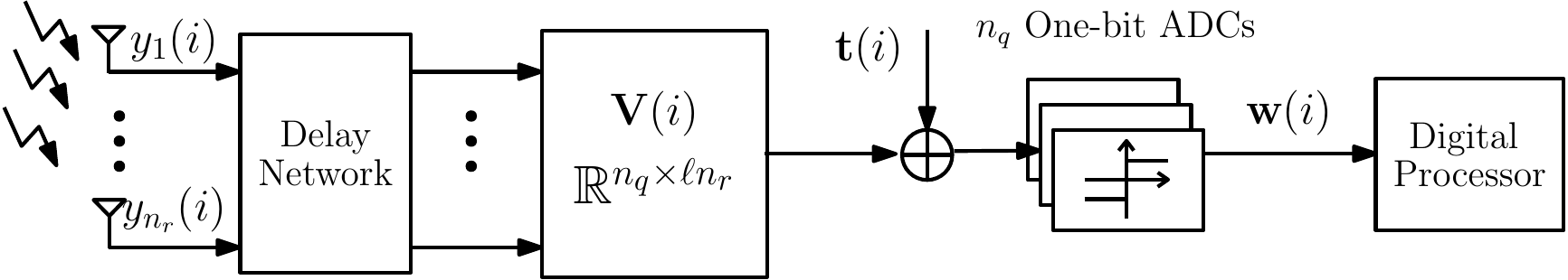}
\caption{A hybrid blockwise receiver, where
the linear combiner at the $i^{\rm th}$ channel-use is characterized by  $\mathbf{V}(i)$ and threshold vector by $\mathbf{t}(i)=(t_1(i),t_2(i),\cdots,t_{n_q}(i))$.}
\label{fig:BW_Rx}
\end{figure}

\subsection{Proposed Receiver}
\label{sec:prop_hyb}
 We now provide a formal description of the proposed hybrid blockwise receiver illustrated in Fig. \ref{fig:BW_Rx}. In this receiver, delay elements and a fixed set of linear analog combiners and thresholds are used to perform linear temporal and spatial processing of the channel outputs before quantization. The receiver processes every $\ell \in \mathbb{N}$ consecutive channel outputs together. Towards this goal, a delay network consisting of $2\ell$ delay elements shown in Fig.~\ref{fig:delay net} is used. Each delay element $D(\cdot)$ is of size $n_r$ (number of receiver antennas) and delays its input vector by one channel-use. Therefore, using this delay network, we can store and access a channel output $\mathbf{y}$ for $2\ell$ channel uses. 
 
 The receiver has $\ell$ pairs of analog combiner matrix and threshold vector using which it processes $\ell$ channel outputs jointly over $\ell$ channel uses. To elaborate, consider the first $2\ell$ channel uses. During channel uses $i\leq \ell$, the delay network is empty and the receiver is idle. During channel uses $ \ell < i \leq 2\ell$, the channel outputs $\mathbf{y}(i), i\in[\ell]$ are available in the delay network and the receiver jointly process them with the $\ell$ linear combiner and threshold vector pairs using one pair at each channel-use. In general, during channel uses $(m-1)\ell< i\leq m\ell$, the receiver jointly process channel outputs $\mathbf{y}(i), (m-2)\ell  <i\leq (m-1)\ell$ using the $\ell$ pairs. The described operation is shown in Table \ref{tb:Arch1}. Note that since during the first $\ell$ channel uses the receiver is idle, we will have a constant delay of $\ell$ channel uses at the start of the transmission for the hybrid blockwise receiver. In Sec.~\ref{sec:numerical_results}, we will show that small values of $\ell$ are sufficient to achieve near optimal high SNR rate.

\begin{table}[b!]
\centering
\caption{Operation timeline of a hybrid blockwie receiver with delay of length $\ell$.}
\label{tb:Arch1}
\includegraphics[width=0.8\textwidth,draft=false]{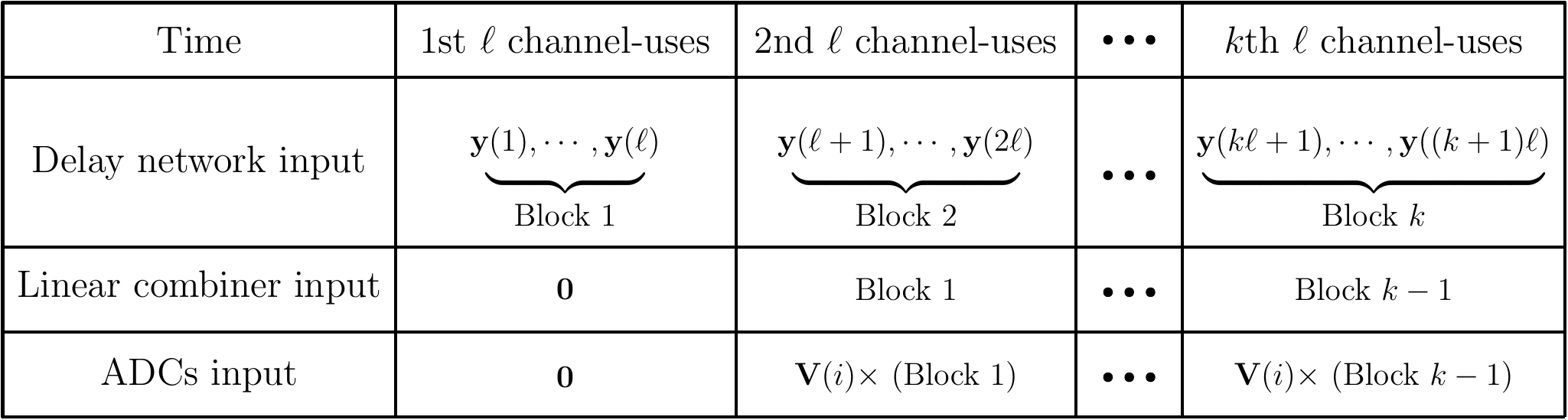}
\end{table}

\begin{figure}[t!]
    \centering
     \includegraphics[width=0.5\textwidth ,draft=false]{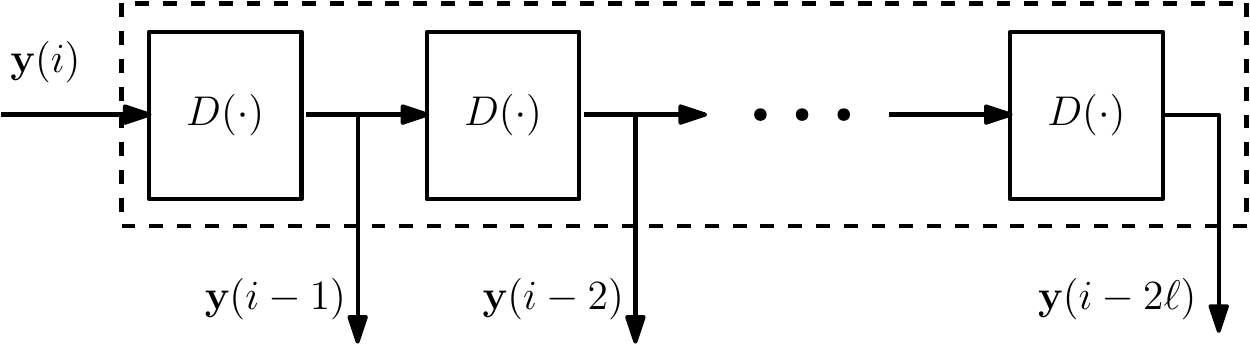}
    \caption{Delay network in the $i^{\rm th}$ channel-use.}
    \label{fig:delay net}
\end{figure}
The maximum achievable rate for a given delay $\ell$ optimized over all analog combining matrices, and threshold vectors is denoted by $C_{\ell}$ and corresponds to the capacity of the hybrid blockwise receiver. 

Note that the hybrid one-shot receiver illustrated in Fig. \ref{fig:lin_comb} is a special case of the hybrid blockwise receiver, where the analog combiner at the receiver only processes one channel output at a time (i.e. $\ell=1$).  
As a result, 
\begin{align}
C_{HOS}=C_{1}\leq C_{\ell},~\ell > 1.
\end{align}
\subsection{Achievable High SNR Single-User Rate}
\label{subsec:PtP_hyb}
Consider the single-user MIMO communication system described in Section \ref{sec:System Model} with the hybrid blockwise receiver employing $n_q$ one-bit ADCs. The pair $(\mathbf{V},\mathbf{t})$ describing the blockwise receiver are taken so that the number of decision regions is maximized as described in Prop.~\ref{Prop:Partition}. In the next theorem, we characterize the high SNR capacity and perform asymptomatic analysis as the number of delay elements $\ell$ is increased.

\begin{Theorem}
\label{th:PtP}

For the single-user MIMO communication system described in Section \ref{sec:System Model} with $n_t$ transmit and $n_r$ receive antennas and channel gain matrix $\mathbf{H}$ where the hybrid blockwise receiver with $n_q$ one-bit ADCs is used, the high SNR capacity $C^{\mathrm{High~SNR}}_{\ell}$ satisfies:
\begin{align}
    \label{eq:rate_ptp_hyb}
\frac{1}{\ell} \log \left( \sum_{i = 1}^{\ell \mathrm{rank}(\mathbf{H})} {\ell n_q \choose i}\right) \leq C^{\mathrm{High~SNR}}_{\ell} \leq \frac{1}{\ell} \log \left( \sum_{i = 1}^{\ell n_r}{\ell n_q \choose i}\right).
\end{align}
When the ADCs threshold vector is set to zero ($\mathbf{t} = \mathbf{0}$), we have
\begin{align}
\frac{1}{\ell} \log \left( 2\sum_{i = 1}^{\ell \mathrm{rank}(\mathbf{H})-1} {\ell n_q-1 \choose i}\right) \leq C^{\mathrm{High~SNR}}_{\ell} \leq \frac{1}{\ell} \log \left( 2\sum_{i = 1}^{\ell n_r-1}{\ell n_q-1 \choose i}\right).
\end{align}
Furthermore, for both zero and optimized threshold vectors, high SNR capacity satisfies:
\begin{itemize}
    \item For a fixed $n_q$ and large $\ell$,
\begin{align}
\label{eq:linf}
\begin{aligned}
n_q \mathrm{h_b}(\alpha)-&\frac{1}{2\ell}\log{\ell}
 + O\left(\frac{1}{\ell}\right)
\leq C^{\mathrm{High~SNR}}_{\ell} \leq 
n_q \mathrm{h_b}(\beta)-\frac{1}{2\ell}\log{\ell}
+O\left(\frac{1}{\ell}\right),\\
&\alpha= \min \left\{\frac{\mathrm{rank}(\mathbf{H})}{n_q},\frac{1}{2}\right\},~\beta=\min\left \{\frac{n_r}{n_q},\frac{1}{2}\right\}.
\end{aligned}
\end{align}
\item For a fixed $\ell$ and large $n_q$,
\begin{align}
\label{eq:nqinf}
    \mathrm{rank}(\mathbf{H}) \log{n_q} + O(1)   \leq C^{\mathrm{High~SNR}}_{\ell}  \leq       n_r\log{n_q} + O(1).
\end{align}
\end{itemize}

\end{Theorem}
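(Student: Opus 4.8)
The plan is to reduce the blockwise receiver over one window of $\ell$ channel uses to a single larger hybrid one-shot receiver, and then invoke Prop.~\ref{Prop:Partition} together with standard binomial-sum asymptotics. Concretely, stacking the $\ell$ noiseless block outputs into $\mathbf{Y}=(\mathbf{y}(1)^T,\dots,\mathbf{y}(\ell)^T)^T\in\mathbb{R}^{\ell n_r}$ gives $\mathbf{Y}=\widetilde{\mathbf{H}}\mathbf{X}$ with $\widetilde{\mathbf{H}}=\mathbf{I}_\ell\otimes\mathbf{H}$ and $\mathbf{X}=(\mathbf{x}(1)^T,\dots,\mathbf{x}(\ell)^T)^T$, so that $\mathrm{rank}(\widetilde{\mathbf{H}})=\ell\,\mathrm{rank}(\mathbf{H})$ and the span of the block output is a subspace of $\mathbb{R}^{\ell n_r}$ of that dimension. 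Because the $\ell$ combiner/threshold pairs produce one sign bit per (combiner row, channel use), the block yields $\ell n_q$ sign bits, each the sign of an affine functional of $\mathbf{Y}$; hence the decision regions are exactly the cells of an arrangement of $\ell n_q$ hyperplanes in dimension $\ell n_r$ acting on the rank-$\ell\,\mathrm{rank}(\mathbf{H})$ signal subspace.

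First I would establish the exact region count. At high SNR the noise is negligible, so the number of reliably decodable messages per block equals the number of realizable sign vectors $|\mathcal{R}(\widetilde{\mathbf{H}},\mathbf{V},\mathbf{t})|$, and the rate is $\tfrac1\ell\log|\mathcal{R}|$ (the $\tfrac1\ell$ since a block spans $\ell$ channel uses). For achievability I would pick $(\mathbf{V},\mathbf{t})$ as in Prop.~\ref{Prop:Partition} applied to $\widetilde{\mathbf{H}}$ (e.g.\ i.i.d.\ entries), giving $\sum_{i=0}^{\ell\,\mathrm{rank}(\mathbf{H})}\binom{\ell n_q}{i}$ cells, and place one transmit point $\mathbf{X}$ in each cell, which is possible because $\widetilde{\mathbf{H}}\mathbf{X}$ sweeps the entire signal subspace on which the cells live. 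For the converse, any $\ell n_q$ hyperplanes in the ambient $\mathbb{R}^{\ell n_r}$ partition it into at most $\sum_{i=0}^{\ell n_r}\binom{\ell n_q}{i}$ cells, upper-bounding the number of distinguishable messages; this yields the two displayed bounds, with the zero-threshold versions following from the second identity of Prop.~\ref{Prop:Partition}. The converse deliberately uses the ambient dimension $\ell n_r$ rather than the signal-subspace dimension, so the sandwich is tight exactly when $\mathbf{H}$ has full row rank.

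Next I would extract the two asymptotic regimes by estimating $\tfrac1\ell\log\sum_{i=0}^{\lfloor pN\rfloor}\binom{N}{i}$ with $N=\ell n_q$. For fixed $n_q$ and $\ell\to\infty$, set $p=\min\{\mathrm{rank}(\mathbf{H})/n_q,\,1/2\}=\alpha$ (resp.\ $\beta$): the tail estimate $\sum_{i=0}^{\lfloor pN\rfloor}\binom{N}{i}=2^{N\mathrm{h_b}(p)}\,\Theta(1/\sqrt N)$ for $p<\tfrac12$ (and $\Theta(2^N)$ for $p\ge\tfrac12$), after dividing by $\ell$ and using $N=\ell n_q$, gives the leading term $n_q\mathrm{h_b}(\alpha)$ together with the $-\tfrac1{2\ell}\log\ell+O(1/\ell)$ correction in \eqref{eq:linf}. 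For fixed $\ell$ and $n_q\to\infty$, the sum is dominated by its top term $\binom{\ell n_q}{\ell\,\mathrm{rank}(\mathbf{H})}=\Theta\big((\ell n_q)^{\ell\,\mathrm{rank}(\mathbf{H})}\big)$, whose log divided by $\ell$ is $\mathrm{rank}(\mathbf{H})\log n_q+O(1)$, and symmetrically $n_r\log n_q+O(1)$ for the upper bound, giving \eqref{eq:nqinf}.

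I expect the main obstacle to be the second-order term in \eqref{eq:linf}: pinning down the $-\tfrac1{2\ell}\log\ell$ coefficient uniformly, and in particular handling the boundary case $p=\tfrac12$ and the clamping $\alpha=\min\{\cdot,1/2\}$ cleanly, since the Stirling correction to $\sum_{i\le pN}\binom{N}{i}$ changes form across $p<\tfrac12$, $p=\tfrac12$, and $p>\tfrac12$, while keeping the residual absorbed into $O(1/\ell)$. The reduction and the region count are routine given Prop.~\ref{Prop:Partition}; the delicate bookkeeping lies entirely in the binomial-sum asymptotics.
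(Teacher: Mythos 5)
Your proposal follows essentially the same route as the paper's Appendix~\ref{app:th1}: stack the block into $\munderbar{\mathbf{y}} = (\mathbf{H}\otimes \mathbf{I}_{\ell})\munderbar{\mathbf{x}} + \munderbar{\mathbf{n}}$, count realizable sign regions via Prop.~\ref{Prop:Partition} with signal-subspace dimension $\ell\,\mathrm{rank}(\mathbf{H})$ for achievability and ambient dimension $\ell n_r$ for the converse, then apply Stirling-based binomial-tail asymptotics (the paper's Prop.~\ref{Prop:nchoosek}) for the two regimes. The boundary subtlety you flag at $\lambda = \tfrac{1}{2}$ is real --- the paper's \eqref{Eq:nchoosek2} is only stated for $\lambda \in \left(0,\tfrac{1}{2}\right)$ and its proof does not treat the clamped case either --- so your sketch is no less complete than the paper's own argument on that point.
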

\begin{proof}
The proof of above theorem is provided in Appendix~\ref{app:th1}.
\end{proof}
As a result of this theorem, we have the following corollary:
\begin{Corollary}
\label{cor:ptp_hyb}
The hybrid blockwise receiver achieves the high SNR rate of $n_q$ bits per channel-use in single-user MIMO when $n_q\leq 2\mathrm{rank}(\mathbf{H})$ for large $\ell$. 
\end{Corollary}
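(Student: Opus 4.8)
The plan is to obtain the corollary directly from the large-$\ell$ asymptotics in \eqref{eq:linf}, by showing that under the hypothesis $n_q \leq 2\,\mathrm{rank}(\mathbf{H})$ both the lower and the upper bound of that display converge to $n_q$ as $\ell \to \infty$. Since the receiver is equipped with exactly $n_q$ one-bit ADCs, the achievable rate can never exceed $n_q$ bits per channel-use, so the real task is simply to push the lower bound up to $n_q$ in the limit.

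First I would evaluate the parameter $\alpha = \min\{\mathrm{rank}(\mathbf{H})/n_q, 1/2\}$ appearing in the lower bound of \eqref{eq:linf}. The hypothesis $n_q \leq 2\,\mathrm{rank}(\mathbf{H})$ is equivalent to $\mathrm{rank}(\mathbf{H})/n_q \geq 1/2$, which immediately forces $\alpha = 1/2$. Because the binary entropy attains its maximum $\mathrm{h_b}(1/2) = 1$, the leading term $n_q\,\mathrm{h_b}(\alpha)$ of the lower bound equals $n_q$.

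Next I would check the upper bound in the same display. Since $\mathrm{rank}(\mathbf{H}) \leq \min\{n_t, n_r\} \leq n_r$, the hypothesis also yields $n_q \leq 2 n_r$, whence $\beta = \min\{n_r/n_q, 1/2\} = 1/2$ and likewise $n_q\,\mathrm{h_b}(\beta) = n_q$. Letting $\ell \to \infty$ in \eqref{eq:linf}, the correction terms $-\frac{1}{2\ell}\log\ell$ and $O\!\left(\frac{1}{\ell}\right)$ both vanish, so $C^{\mathrm{High~SNR}}_{\ell}$ is sandwiched between two sequences each tending to $n_q$, giving $\lim_{\ell\to\infty} C^{\mathrm{High~SNR}}_{\ell} = n_q$.

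There is essentially no obstacle at the level of the corollary: the entire content is already contained in Theorem~\ref{th:PtP}, and the argument reduces to substituting the critical value of $\alpha$ (and of $\beta$) that saturates the binary entropy. The one point worth stating carefully is that the hypothesis is exactly the threshold $\mathrm{rank}(\mathbf{H})/n_q \geq 1/2$ at which $\mathrm{h_b}(\alpha)$ becomes $1$; for $n_q > 2\,\mathrm{rank}(\mathbf{H})$ one would have $\alpha < 1/2$ and the limiting lower bound $n_q\,\mathrm{h_b}(\alpha)$ would fall strictly below $n_q$, so the stated condition is sharp for this lower-bound argument.
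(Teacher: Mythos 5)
Your proposal is correct and matches the paper's intended argument: the corollary is stated as an immediate consequence of Theorem~\ref{th:PtP}, and the implicit proof is exactly your substitution, namely that $n_q \leq 2\,\mathrm{rank}(\mathbf{H})$ forces $\alpha = 1/2$ (and, since $\mathrm{rank}(\mathbf{H}) \leq n_r$, also $\beta = 1/2$), so both bounds in \eqref{eq:linf} converge to $n_q \mathrm{h_b}(1/2) = n_q$ as $\ell \to \infty$. Your closing remark on the sharpness of the threshold is a nice addition not stated in the paper, but it does not change the argument.
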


Next, we provide communication strategies and their corresponding high SNR achievable rates for the MIMO MAC (uplink) and MIMO BC (downlink) communication systems employing hybrid blockwise receiver.

\subsection{Achievable High SNR MAC Rate}
\label{subsec:MAC_hyb}
For the MIMO MAC scenario with $n_q$ one-bit ADCs at the receiver, we consider a time-sharing method which we argue achieves the optimal high SNR sum-rate. Assume that user $j$, where $j\in[n_u]$, transmits for $\eta_j,~0<\eta_j<1$ portion of the total channel uses with $\sum_{j\in[n_u]}\eta_j = 1$. If each user employs the communication strategy developed for the single-user case (Thm.~\ref{th:PtP}), at high SNR, user $j$ achieves $\eta_j  C^{\mathrm{High~SNR}}_{\ell}$ bits per channel-use which leads us to following result. 
\begin{Corollary}
\label{cor:MAC_hyb}
The hybrid blockwise receiver achieves the high SNR sum-rate of $n_q$ bits per channel-use in MIMO MAC when $n_q \leq 2\displaystyle\min_{j\in[n_u]} \mathrm{rank}(\mathbf{H}_{j})$.
\end{Corollary}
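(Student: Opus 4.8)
The plan is to sandwich the sum-rate between a cardinality converse and a time-sharing achievability bound, both equal to $n_q$. For the converse, I would note that at every channel-use the receiver observes exactly $n_q$ one-bit ADC outputs, so the entire quantized observation over any block of $N$ channel uses is a binary vector of length $N n_q$. By the data-processing inequality, the mutual information between the collection of all transmitted messages and the receiver's quantized output is at most its entropy, namely $N n_q$ bits, giving a sum-rate upper bound of $n_q$ bits per channel-use regardless of the number of users, their channel matrices, or the analog processing employed.

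For achievability I would use the time-sharing scheme described immediately before the statement. Partition the transmission horizon into per-user intervals, assigning a fraction $\eta_j$ of the channel uses to user $j$ with $\sum_{j\in[n_u]}\eta_j=1$. During user $j$'s intervals all other users stay silent, so the channel in \eqref{eq:MACMIMO} collapses to a single-user MIMO channel with gain matrix $\mathbf{H}_j$, to which the hybrid blockwise receiver and its single-user analysis apply verbatim. The receiver simply loads the combiner/threshold pair $(\mathbf{V},\mathbf{t})$ optimized for $\mathbf{H}_j$ during those slots, exactly as in Thm.~\ref{th:PtP}.

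The key step is then to invoke the single-user high-SNR result. Since $n_q\le 2\min_{j\in[n_u]}\mathrm{rank}(\mathbf{H}_j)\le 2\,\mathrm{rank}(\mathbf{H}_j)$ for every $j$, Cor.~\ref{cor:ptp_hyb} applies to each user; equivalently, in the $\alpha=\tfrac12$ branch of \eqref{eq:linf} we have $\mathrm{h_b}(\tfrac12)=1$, so the single-user lower bound tends to $n_q$. Hence user $j$ achieves $n_q$ bits per active channel-use at high SNR as $\ell\to\infty$. Weighting by the active fraction, user $j$'s overall rate approaches $\eta_j n_q$, and summing yields $\sum_{j\in[n_u]}\eta_j n_q=n_q$, matching the converse.

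The main obstacle is bookkeeping at the boundaries between time-sharing intervals rather than anything conceptual: each time the receiver begins processing a new user it incurs the fixed $\ell$-channel-use start-up latency of the delay network (the idle period in Table~\ref{tb:Arch1}), plus a short transient while the delay line flushes the previous user's samples. I would handle this by letting the per-user block length grow without bound so that these $O(\ell)$ edge losses are amortized to zero per channel-use, leaving the asymptotic per-user rates, and hence the sum-rate, unchanged. With that amortization the lower and upper bounds coincide at $n_q$, establishing the claim.
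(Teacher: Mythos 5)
Your proposal is correct and follows essentially the same route as the paper: time-sharing with fractions $\eta_j$, silencing the other users so each slot reduces to a single-user MIMO channel with gain $\mathbf{H}_j$, and invoking Thm.~\ref{th:PtP}/Cor.~\ref{cor:ptp_hyb} (the $\alpha=\tfrac12$ branch of \eqref{eq:linf}, valid since $n_q \leq 2\,\mathrm{rank}(\mathbf{H}_j)$ for every $j$) so that user $j$ attains $\eta_j n_q$ and the rates sum to $n_q$. Your explicit cardinality converse and the amortization of the $O(\ell)$ start-up/flush transients are details the paper leaves implicit (it simply asserts user $j$ achieves $\eta_j C^{\mathrm{High~SNR}}_{\ell}$), but they strengthen rather than change the argument.
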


For the case of MIMO BC, we need a more sophisticated form of time sharing and processing. The  following  example  describes  the  key ideas we use to establish the high SNR capacity bounds in Sec.~\ref{subsec:BC_hyb}.
\subsection{Example: BC with Hybrid Blockwise Receiver}
\label{Ex:BC_hyb}

\begin{figure}[t]
 \centering
\includegraphics[width=0.5 \textwidth,draft=false]{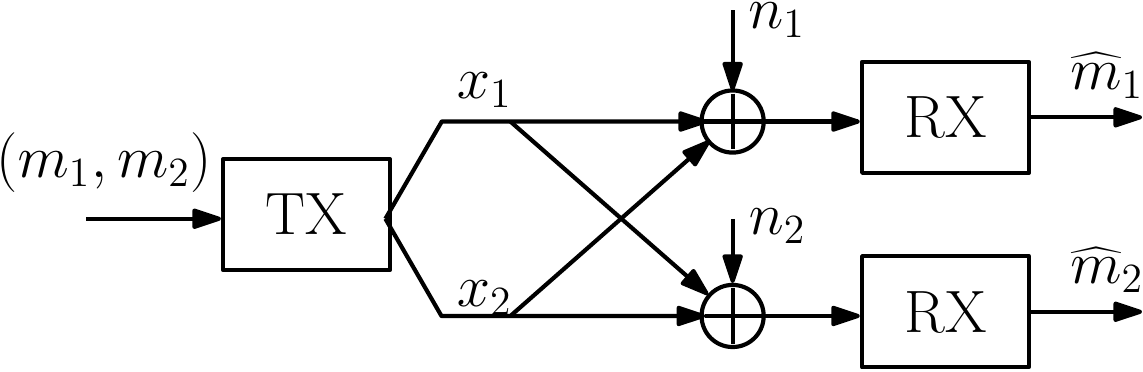}
\caption{A two user BC example. The transmitter is equipped with two antennas and each receiver is equipped with a single antenna and a one-bit ADC.}
\label{fig:BC}
\end{figure}

Consider the two-user BC channel shown in Fig. \ref{fig:BC}, where $n_t=2,n_{r,1}=n_{r,2}$, $n_{q,1}=n_{q,2}=1$, and $\mathbf{H}_1=\mathbf{H}_2=
(1, 1)$.

First consider using hybrid one-shot receiver and assume we use equal time-shares between the users. Since each receiver generates one bit for each observed channel output, both users achieve the high SNR rate of $0.5$ bits per channel use.

Next, we show how using hybrid blockwise receiver can improve this rate. The main idea is to use the ADCs to extract information from previously observed channel outputs when there are no new observations. Suppose we transmit to each user for two channel uses before switching to the other one. Each user employs a hybrid blockwise receiver that jointly process two channel observations over four channel uses. 

To elaborate, consider the linear analog combiner and threshold vector pairs 
\begin{align}
\label{eq:vandtBC}
\begin{aligned}
&(\mathbf{V}_{a}, \mathbf{t}_{a})\!=\!\! \left(\begin{pmatrix}
         1 &0
    \end{pmatrix}\!,\!
         0.5  
   \right)\!,~~
    (\mathbf{V}_{b}, \mathbf{t}_{b})\!=\!\! \left(\begin{pmatrix}
         \cos(\pi/4) & \sin(\pi/4)
    \end{pmatrix}\!,\!
         -0.5 
     \right)\!,\\
&    (\mathbf{V}_{c}, \mathbf{t}_{c})\!=\!\! \left(\begin{pmatrix}
         0 &1
    \end{pmatrix}\!,\!
         0.5 
    \right)\!,~~
    (\mathbf{V}_{d}, \mathbf{t}_{d})\!=\!\! \left(\begin{pmatrix}
         \cos(3\pi/4) & \sin(3\pi/4)
    \end{pmatrix}\!,\!
         -0.5 
     \right)\!,
    \end{aligned}
\end{align}
and let us consider the first six channel uses at the first user. In the first two channel uses, the receiver is idle. In the third, fourth, fifth, and sixth channel-use, the receiver processes $y(1)$ and $y(2)$ using the linear analog combiner and threshold vector pair $(\mathbf{V}_{a},\tbf_{a})$, $(\mathbf{V}_{b},\tbf_{b})$, $(\mathbf{V}_{c},\tbf_{c})$, and $(\mathbf{V}_{d},\tbf_{d})$ from \eqref{eq:vandtBC}, respectively. By the seventh channel-use, the receiver has observed two new channel outputs and repeats the process. The decision regions of this receiver are shown in Fig.~\ref{fig:exe1_constel}b.
At high SNR, since the noise is negligible, this receiver achieves the maximum rate of $\frac{\log{11}}{4} = 0.8648$ bits per channel-use (the factor $4$ is in the denominator because it takes four channel uses to determine each decision region) which is greater than the high SNR rate of the hybrid one-shot receiver with equal time sharing.

\subsection{Achievable High SNR BC Rate}
\label{subsec:BC_hyb}
Generalizing the example in Sec.~\ref{Ex:BC_hyb} to MIMO BC scenario, we consider a time sharing strategy in which we transmit to user $j$, where $j\in[n_u]$ in $\eta_j \ell$ channel uses with $0<\eta_j<1$. Each user employs the hybrid blockwise receiver of Sec.~\ref{sec:prop_hyb}. However, user $j$ process every $\eta_j\ell$ channel observations during $\ell$ channel uses with $\ell$ different pairs of linear analog combiners and thresholds $(\mathbf{V}, \mathbf{t})$. This leads to the following theorem.
\begin{Theorem}
\label{th:BC_hyb}
Based on the above communication scheme, the maximum achievable high SNR rate $R^{\mathrm{High~SNR}}_{j,\ell}$ of user $j$, where $j\in[n_u]$, satisfies:
\begin{align}
    \label{eq:rate_BC_hyb}
\frac{1}{\ell} \log \left( \sum_{i = 1}^{\eta_j\ell \mathrm{rank}(\mathbf{H}_j)} {\ell n_{q,j} \choose i}\right) \leq R^{\mathrm{High~SNR}}_{j,\ell} \leq \frac{1}{\ell} \log \left( \sum_{i = 1}^{\ell \eta_jn_r}{\ell n_{q,j}\choose i}\right).
\end{align}
When the ADCs threshold vector is set to zero ($\mathbf{t} = \mathbf{0}$), we have
\begin{align}
\frac{1}{\ell} \log \left( 2\sum_{i = 1}^{\eta_j\ell \mathrm{rank}(\mathbf{H}_j)-1} {{\ell n_{q,j}}-1 \choose i}\right) \leq R^{\mathrm{High~SNR}}_{j,\ell}\leq \frac{1}{\ell} \log \left( 2\sum_{i = 1}^{\eta_j{\ell n_r}-1}{{\ell n_{q,j}}-1 \choose i}\right).
\end{align}
Furthermore, for both zero and optimized threshold vectors, the maximum high SNR achievable rate satisfies:
\begin{itemize}
    \item For a fixed $n_{q,j}$ and large $\ell$,
\begin{align}
\label{eq:linf_BC}
\begin{aligned}
 n_{q,j}\mathrm{h_b}(\alpha_j)-&\frac{1}{2\ell}\log{\ell}
 + O\left(\frac{1}{\ell}\right)
\leq R^{\mathrm{High~SNR}}_{j,\ell} \leq 
n_{q,j} \mathrm{h_b}(\beta_j)-\frac{1}{2\ell}\log{\ell}
+O\left(\frac{1}{\ell}\right),\\
&\alpha_j= \min \left\{\frac{\eta_j\mathrm{rank}(\mathbf{H}_{j})}{n_{q,j}},\frac{1}{2}\right\},~\beta_j=\min\left \{\frac{\eta_jn_r}{n_{q,j}},\frac{1}{2}\right\}.
\end{aligned}
\end{align}
\item For a fixed $\ell$ and large $n_q$,
\begin{align}
\label{eq:nqinf_BC}
    \eta_j\mathrm{rank}(\mathbf{H}_j) \log{n_{q,j}} + O(1)   \leq R^{\mathrm{High~SNR}}_{j,\ell} \leq  \eta_j n_r\log{n_{q,j}} + O(1).
\end{align}
\end{itemize}
\end{Theorem}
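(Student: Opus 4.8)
The plan is to reduce the BC rate characterization to the single-user counting argument of Thm.~\ref{th:PtP} by identifying the correct effective parameters induced by the time-sharing scheme. First I would fix a super-block of $\ell$ consecutive channel uses and track receiver $j$. By construction only $\eta_j\ell$ of these uses carry a signal intended for user $j$; during those uses receiver $j$ collects $\eta_j\ell$ channel outputs, each lying in $\mathrm{span}(\mathbf{H}_j)\subseteq\mathbb{R}^{n_{r,j}}$. Stacking these useful observations into one vector places the noiseless signal in a space of ambient dimension $\eta_j\ell\,n_{r,j}$ whose signal-bearing dimension can be made as large as $\eta_j\ell\,\mathrm{rank}(\mathbf{H}_j)$ by coding across the block (the effective block channel is block-diagonal with $\eta_j\ell$ copies of $\mathbf{H}_j$). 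The crucial structural observation is that, although the transmitter feeds user $j$ for only $\eta_j\ell$ uses, the delay network lets receiver $j$ apply all $\ell$ combiner/threshold pairs to this same stacked observation, so the ADCs contribute $\ell n_{q,j}$ hyperplanes acting on the $\eta_j\ell\,n_{r,j}$-dimensional space. This is exactly the mechanism illustrated in Sec.~\ref{Ex:BC_hyb}, where $\ell=4$, $\eta_j\ell=2$, and four hyperplanes carve $\sum_{i=0}^{2}\binom{4}{i}=11$ regions.

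With this identification in place, the second step is to invoke Prop.~\ref{Prop:Partition} with $\ell n_{q,j}$ hyperplanes on a space of dimension $\eta_j\ell\,\mathrm{rank}(\mathbf{H}_j)$ for achievability and $\eta_j\ell\,n_{r,j}$ for the converse. For generic $(\mathbf{V},\mathbf{t})$ the region count meets the upper limit of the proposition, yielding the lower bound $\tfrac1\ell\log\bigl(\sum_{i}\binom{\ell n_{q,j}}{i}\bigr)$ summed to $\eta_j\ell\,\mathrm{rank}(\mathbf{H}_j)$; no choice of hyperplanes can exceed the count permitted for ambient dimension $\eta_j\ell\,n_{r,j}$, giving the matching upper bound. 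The zero-threshold bounds follow identically from the second identity of Prop.~\ref{Prop:Partition}. The high-SNR operational claim — that each decision region corresponds to one reliably distinguishable message and that the per-channel-use rate is the region count divided by $\ell$ — is inherited from the proof of Thm.~\ref{th:PtP}, the only change being the book-keeping of the $\eta_j\ell$ active uses inside each length-$\ell$ block.

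The third step is the asymptotics, which reuse the binomial-sum estimates of Thm.~\ref{th:PtP} with substituted parameters $N=\ell n_{q,j}$ and $K=\eta_j\ell\,\mathrm{rank}(\mathbf{H}_j)$ (resp.\ $\eta_j\ell\,n_{r,j}$). For fixed $n_{q,j}$ and large $\ell$, both $N$ and $K$ grow linearly in $\ell$ with ratio $K/N=\alpha_j$, capped at $\tfrac12$ since a partial binomial sum past the centre saturates at $2^{N-1}$; the same central-term estimate used in Thm.~\ref{th:PtP} then produces the $n_{q,j}\mathrm{h_b}(\alpha_j)-\tfrac{1}{2\ell}\log\ell+O(1/\ell)$ form, after absorbing $\tfrac1{2\ell}\log n_{q,j}$ into $O(1/\ell)$. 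For fixed $\ell$ and large $n_{q,j}$ the sum is dominated by its top term $\binom{N}{K}\sim N^{K}/K!$, so $\tfrac1\ell\log\binom{N}{K}=\tfrac{K}{\ell}\log n_{q,j}+O(1)=\eta_j\,\mathrm{rank}(\mathbf{H}_j)\log n_{q,j}+O(1)$, and likewise for the converse with $n_{r,j}$.

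The step I expect to be the real obstacle is the first one: rigorously justifying that the $\ell n_{q,j}$ hyperplanes produced across a length-$\ell$ block genuinely partition the $\eta_j\ell\,n_{r,j}$-dimensional stacked-observation space in a way that is (i) realizable by a single fixed set of $\ell$ combiner/threshold pairs feeding the delay network and (ii) decodable despite the useful observations being interleaved with uses devoted to other users. Everything downstream is a parameter substitution into Thm.~\ref{th:PtP} and Prop.~\ref{Prop:Partition}; the content lies in showing that time-sharing plus temporal analog re-measurement of the stored outputs reproduces the single-user geometry with the \emph{reduced} signal dimension $\eta_j\ell\,\mathrm{rank}(\mathbf{H}_j)$ but the \emph{undiminished} hyperplane count $\ell n_{q,j}$.
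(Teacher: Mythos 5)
Your proposal is correct and follows essentially the same route as the paper: Thm.~\ref{th:BC_hyb} is stated there without a separate proof precisely because it is the proof of Thm.~\ref{th:PtP} with the parameter substitution you identify --- $\ell n_{q,j}$ hyperplanes (from the $\ell$ combiner/threshold pairs applied through the delay network) partitioning a stacked signal space of dimension $\eta_j\ell\,\mathrm{rank}(\mathbf{H}_j)$ for achievability and $\eta_j\ell n_{r}$ for the converse, then Prop.~\ref{Prop:Partition} for the region counts and Prop.~\ref{Prop:nchoosek} for the asymptotics (including absorbing $\tfrac{1}{2\ell}\log n_{q,j}$ into $O(1/\ell)$ and the saturation at $\alpha_j,\beta_j\le\tfrac12$). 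The structural step you flag as the potential obstacle is exactly how the scheme of Sec.~\ref{subsec:BC_hyb} is defined, and realizability of the maximal region count is covered by the paper's remark that generic (e.g., i.i.d.\ random) $(\mathbf{V},\mathbf{t})$ attain the maximum in Prop.~\ref{Prop:Partition}, so no additional argument is required.
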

We then have the following corollary:
\begin{Corollary}
\label{cor:BC_hyb}
The hybrid blockwise receiver achieves the optimal high SNR rate of $n_{q,j}$ bits per channel-use for user $j$ in MIMO BC when $n_{q,j} \leq 2\eta_j\mathrm{rank}(\mathbf{H}_{j})$, where $(\eta_j)_{j\in[n_u]}$ is such that $0<\eta_j<1$ and $\sum_{j\in [n_u]}\eta_j = 1$.
\end{Corollary}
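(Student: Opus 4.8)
The plan is to obtain Corollary~\ref{cor:BC_hyb} as a direct specialization of the asymptotic bounds in Theorem~\ref{th:BC_hyb}, combined with a simple converse. Concretely, I would start from the fixed-$n_{q,j}$, large-$\ell$ sandwich \eqref{eq:linf_BC}, show that the stated rank condition saturates both the parameters $\alpha_j$ and $\beta_j$ to $1/2$, and then pass to the limit $\ell\to\infty$.

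First I would unpack the hypothesis $n_{q,j}\le 2\eta_j\mathrm{rank}(\mathbf{H}_j)$. Rearranging, this says precisely that $\eta_j\mathrm{rank}(\mathbf{H}_j)/n_{q,j}\ge \tfrac12$, so that the lower-bound parameter collapses to $\alpha_j=\min\{\eta_j\mathrm{rank}(\mathbf{H}_j)/n_{q,j},\tfrac12\}=\tfrac12$. The one step that requires a moment's care---and which I expect to be the only genuine content beyond routine substitution---is verifying that the \emph{same} condition forces the upper-bound parameter $\beta_j$ to $\tfrac12$ as well. This holds because $\mathbf{H}_j$ has $n_r$ rows, so $\mathrm{rank}(\mathbf{H}_j)\le n_r$, and therefore $\eta_j n_r/n_{q,j}\ge \eta_j\mathrm{rank}(\mathbf{H}_j)/n_{q,j}\ge \tfrac12$, giving $\beta_j=\min\{\eta_j n_r/n_{q,j},\tfrac12\}=\tfrac12$. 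Thus the single rank condition simultaneously saturates the lower and upper parameters, a point worth stating explicitly since otherwise the two ends of \eqref{eq:linf_BC} need not meet.

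With $\alpha_j=\beta_j=\tfrac12$ and $\mathrm{h_b}(\tfrac12)=1$, inequality \eqref{eq:linf_BC} sandwiches $R^{\mathrm{High~SNR}}_{j,\ell}$ between two quantities both equal to $n_{q,j}-\frac{1}{2\ell}\log\ell+O(1/\ell)$. Letting $\ell\to\infty$, the vanishing of the correction terms $-\frac{1}{2\ell}\log\ell$ and $O(1/\ell)$ yields $\lim_{\ell\to\infty}R^{\mathrm{High~SNR}}_{j,\ell}=n_{q,j}$, so the rate $n_{q,j}$ bits per channel-use is achievable for user $j$ in the large-blocklength regime. I would note that the same conclusion follows verbatim from the zero-threshold bounds, so the claim is insensitive to whether thresholds are optimized.

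Finally, to justify the word \emph{optimal}, I would invoke the converse already recorded in the introduction: a receiver equipped with $n_{q,j}$ one-bit ADCs produces $n_{q,j}$ binary outputs per channel-use and hence at most $2^{n_{q,j}}$ distinct symbols, capping the conveyed information at $n_{q,j}$ bits per channel-use irrespective of SNR, analog processing, or blocklength. Combining this hard ceiling with the achievability of the preceding paragraph shows that $n_{q,j}$ is exactly the best attainable high SNR rate under the stated condition, completing the corollary. The argument is thus almost entirely a substitution into Theorem~\ref{th:BC_hyb}; the main (and minor) obstacle is the observation that $\mathrm{rank}(\mathbf{H}_j)\le n_r$ is what lets the single condition close the gap between $\alpha_j$ and $\beta_j$.
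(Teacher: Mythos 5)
Your proposal is correct and matches the paper's (implicit) argument: the corollary is stated as a direct consequence of Theorem~\ref{th:BC_hyb}, obtained by noting that $n_{q,j}\le 2\eta_j\mathrm{rank}(\mathbf{H}_j)$ forces $\alpha_j=\tfrac12$ so the lower bound of \eqref{eq:linf_BC} tends to $n_{q,j}$ as $\ell\to\infty$, with optimality from the trivial ceiling of $n_{q,j}$ binary ADC outputs per channel-use. Your additional observation that $\mathrm{rank}(\mathbf{H}_j)\le n_r$ also saturates $\beta_j$ is correct (modulo the paper's $n_r$ versus $n_{r,j}$ notation) but not needed, since achievability requires only the lower bound.
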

\subsection{Observations}
The following observations are based on Thm.~\ref{th:PtP}, and \ref{th:BC_hyb}, and Cor.~\ref{cor:ptp_hyb},~\ref{cor:MAC_hyb}, and \ref{cor:BC_hyb} for the hybrid blockwise receiver.
\\\noindent {\bf I)} The single-user high SNR capacity $C^{\mathrm{High~SNR}}_{\ell}$ increases with $\ell$ and $n_q$. Furthermore, for large $\ell$ or $n_q$, $C^{\mathrm{High~SNR}}_{\ell}$ for the cases of zero and non-zero threshold ADCs converge. This shows that when long delays $\ell$ can be tolerated or the number of ADCs is large,  zero threshold ADCs which are easier to implement can be used without any loss in high SNR rate.
\\\noindent {\bf II)} The hybrid blockwise receiver achieves the optimal high SNR rate of $n_q$ bits per channel-use for the single-user MIMO when $n_q\leq 2\mathrm{rank}(\mathbf{H})$. This improves the results of the hybrid one-shot receiver (e.g. \cite{abbasISIT2018}, \cite{mo2015capacity}), where the high SNR rate of $n_q$ bits per channel-use is only achieved when $n_q \leq \mathrm{rank}(\mathbf{H})$. Note that $n_q \leq \mathrm{rank}(\mathbf{H})$ effectively means less than a one-bit ADC per antenna which could be limiting. 
\\\noindent {\bf III)} The hybrid blockwise receiver achieves the optimal high SNR sum-rate of $n_q$ bits per channel-use in MIMO MAC and the optimal per user high SNR rates of of $n_{q,j}$ bits per channel-use in MIMO BC under conditions on the channel ranks.

 \section{Adaptive Threshold Receiver}
\label{Sec:FBW}
The optimality of the hybrid blockwise receiver at high SNR can only be established under conditions on the rank of the channel, as stated in Cor.~\ref{cor:ptp_hyb},~\ref{cor:MAC_hyb}, and \ref{cor:BC_hyb}. In this section, we propose a more complex receiver which incorporates adaptively changing the ADC thresholds at each channel-use based on their outputs in the previous channel uses. We show that the proposed receiver achieves the optimal high SNR rate in single-user MIMO, per user high SNR rate in MIMO BC, and high SNR sum-rate in MIMO MAC irrespective of the channel ranks. 

Before providing a formal description of the proposed receiver, we consider the following example that motivates the use of adaptive thresholds.

\subsection{Example: Single-User Adaptive Threshold Receiver}
\label{Ex:2}

\begin{figure}[t]
 \centering
\includegraphics[width=0.7\textwidth, draft=false]{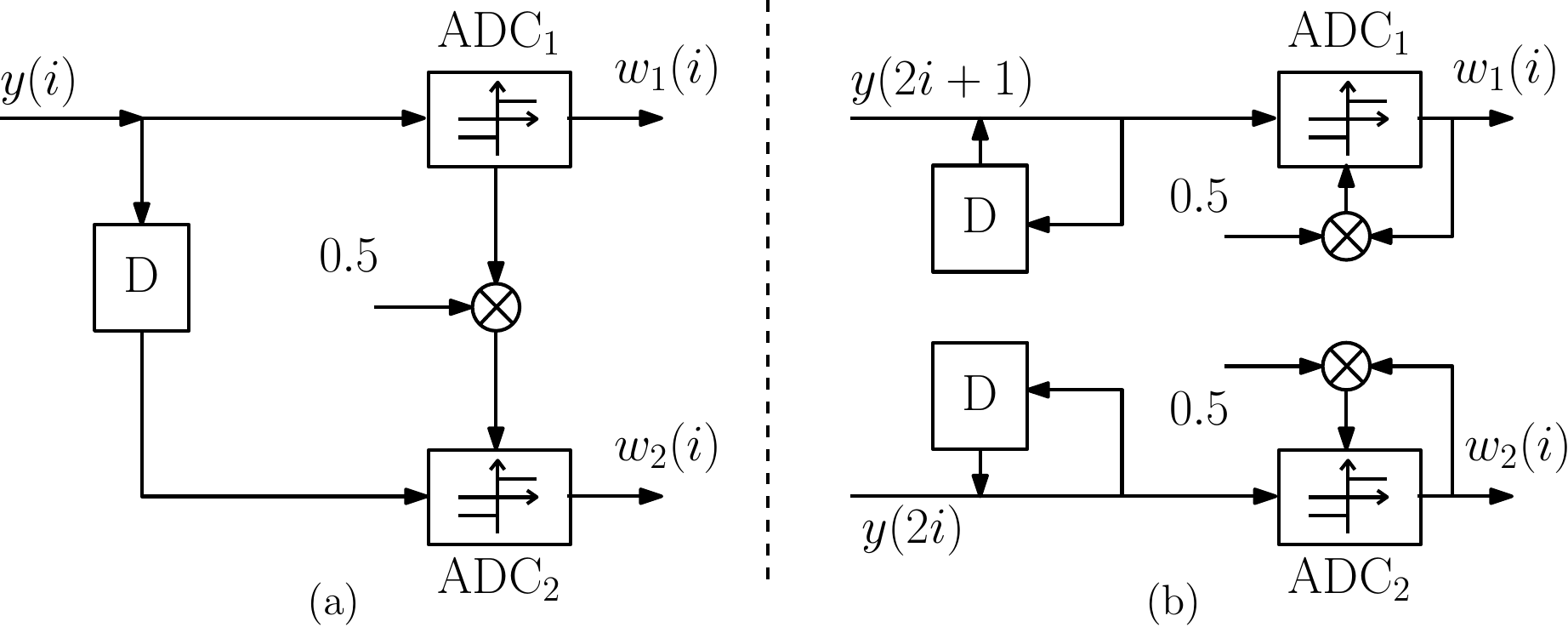}
\caption{Two adaptive threshold receivers with two one-bit ADCs that achieve the optimal rate of two bits per channel-use at high SNR for a single-user SISO system.}
\label{fig:simple}
\end{figure}

Figure~\ref{fig:simple}a shows an example of the adaptive threshold receiver with two one-bit ADCs for a single-user SISO channel. Let us consider the first two channel uses. In the first channel-use, the received signal is input to ADC$_1$ whose threshold is zero. In the second channel-use, the same signal is input to ADC$_2$ whose threshold is set to half of ADC$_1$ output from the first channel-use. Meanwhile, ADC$_1$ quantizes the received signal in the second channel-use. Considering that the receiver operates in a similar fashion for the rest of the communication, we have:
\begin{align*} ({w}_1(i),{w}_2(i+1))=
    \begin{cases}
    (-1,-1) \qquad & \text{if } y(i)<-\frac{1}{2},\\
    (-1,+1)&   \text{if } -\frac{1}{2}<y(i)<0,\\
    (+1,-1)&   \text{if } 0<y(i)<\frac{1}{2},\\
    (+1,+1)&   \text{if } \frac{1}{2}<y(i).
    \end{cases}
\end{align*}
As a result, at high SNR for $i>1$, we get two bits per channel-use and for $i=1$, we get $1$ bit since ADC$_2$ is idle. Hence, the rate of two bits per channel-use is achieved. Note that this rate cannot be improved since there are only two one-bit ADCs.

Figure~\ref{fig:simple}a is not the only way of utilizing adaptive thresholds that leads to the optimal high SNR rate. For example, consider the receiver in Fig.~\ref{fig:simple}b.
Here, received signals at odd and even channel uses are input to ADC$_1$ and ADC$_2$, respectively. Each ADC processes its input signal over two channel uses with variable thresholds. In the first channel-use, the threshold is zero and in the second channel-use, it is set to half of the ADC's previous output. Similar to Fig.~\ref{fig:simple}a, this receiver also achieves the optimal rate of two bits per channel-use at high SNR.

Extending the receivers in Fig.~\ref{fig:simple}, we can argue that for a SISO channel using $n_q$ one-bit ADCs with adaptive thresholds, $n_q$ bits per channel-use at high SNR can be achieved. The main idea is to perform $n_q$-bit quantization of each received signal over $n_q$ channel uses. To elaborate, in the first channel-use processing a signal, we use an ADC with zero threshold, and in the $i^{\rm th}$ channel-use $i> 1$ of processing it, we use the threshold $\sum_{j = 1}^{i-1} 2^{n_q-j-1}b_j$, where $b_j \in \{-1,+1\}$ is the $j^{\rm th}$ bit extracted from the signal. To implement this, one can use a receiver as in Fig.~\ref{fig:simple}a, where there is a fixed temporal order among the ADCs and the threshold of each ADC is a linear function of the previous ADCs' outputs. Another possible implementation is as in Fig.~\ref{fig:simple}b, where an ADC is used for $n_q$ channel uses and its threshold is updated based on its previous outputs. Since each ADC extracts one-bit at each channel-use and the presence of $n_q$ ADCs lets us process $n_q$ signals in parallel, we can hence achieve $n_q$ bits per channel-use at high SNR. Note that in the first $n_q-1$ channel-uses, the rate is less than $n_q$ bits per channel-use as some of the ADCs are idle. However, aside from this constant delay at the start of the transmission, this receiver does not cause any other delays.

There are already different types of ADCs with adaptive thresholds in the literature such as successive approximation register and sigma-delta ADCs which have been proposed for reducing power consumption or area of the circuit \cite{5746277,5711005,5433830,6043594}. Therefore, one might be able to utilize these ADCs the same way as the ADCs in Fig.~\ref{fig:simple} to reduce the power consumption.

\begin{figure}[t]
\centering
\includegraphics[width= 0.7\textwidth, draft = false]{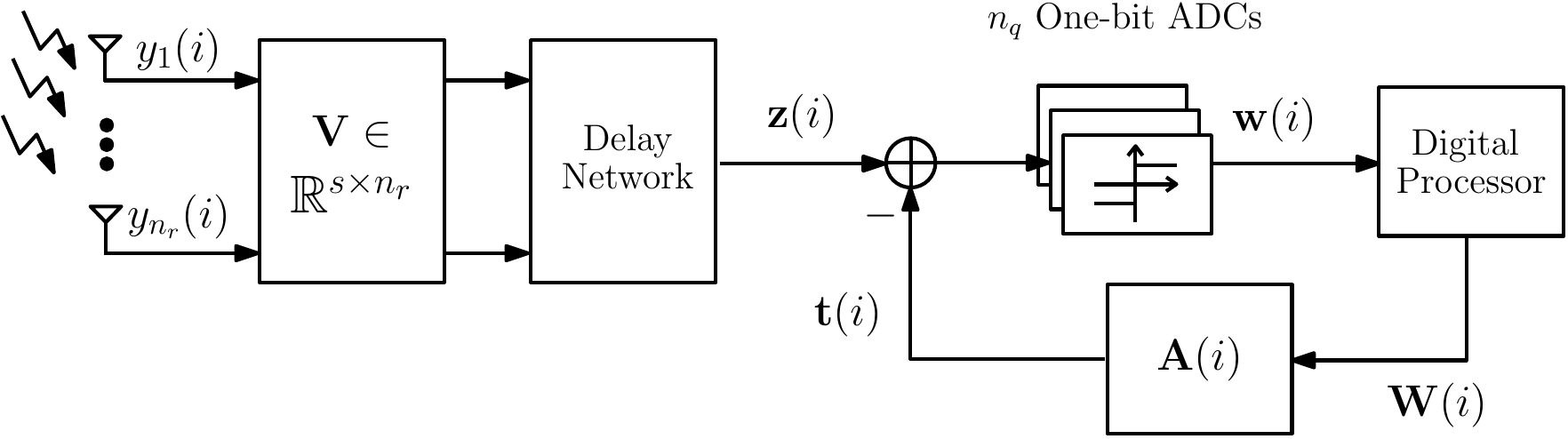}
\caption{ An adaptive threshold receiver using a linear analog combiner with $s$ output streams, $n_q$ one-bit ADCs, delay network, and adaptive thresholds.}
\label{fig:PtP_2}
\end{figure}

\subsection{Proposed Receiver}
\label{sec:prp_adap}

The block diagram of the proposed adaptive threshold receiver for a MIMO system is presented in Fig. \ref{fig:PtP_2}. A linear analog combiner is used to perform one-shot (i.e., spatial) processing of the received signals similar to the hybrid one-shot receiver. However, the combined streams are fed to a delay network that inputs them to a set of one-bit ADCs with adaptive thresholds.

In this receiver, we use a delay network similar to the one in Fig.~\ref{fig:delay net}. However, the size of each delay element is set to $s$ (the number of output streams of the analog combiner). To determine the length of the delay network, assume that the $k^{\rm th}$ output stream of the analog combiner, where $k\in[s]$, is allocated $n_{q,k}$ one-bit ADCs. We perfom $n_{q,k}$-bit quantization of each of the received signals in this stream in a manner similar to the example in Sec.~\ref{Ex:2}. A signal is processed over $n_{q,k}$ channel uses and one bit is extracted from it at each channel-use. Therefore, we require at least $n_{q,k}-1$ delay elements for the $k^{\rm th}$ stream. As a result, in total, it suffices to use a delay network of length $\displaystyle\max_{k\in[s]}{\{n_{q,k}\}}-1$. There are also other designs that can work. For example, the receivers in Fig.~\ref{fig:simple}a and Fig.~\ref{fig:simple}b have different delay networks and both achieve optimal high SNR rates.

At high SNR, it takes $n_{q,k}-1$ channel uses for the $k^{\rm th}$ stream to achieve the rate of $n_{q,k}$ bits per channel-use based on our discussion in the example in Sec.~\ref{Ex:2}. Therefore, this receiver has a constant delay of $\displaystyle\max_{k\in[s]}{\{n_{q,k}\}}-1$ channel uses before achieving the optimal high SNR rate of $n_q$ bits per channel-use. Furthermore, our analysis in \cite{Abbas2020Thr}, where we have provided practical mmWave simulations of this receiver, shows that having $n_{q,k}=4$ for all streams can lead to a near-optimal performance. This corresponds to a constant delay of only three channel uses at the start of communication.

To obtain ADC thresholds at each channel-use, we use a linear combination of the ADC outputs from previous channel uses similar to the example in Sec.~\ref{Ex:2}. For a stream with $n_{q,k}$ one-bit ADCs, it is enough to only consider ADC outputs from previous $n_{q,k}-1$ channel uses. Consider matrix ${\mathbf{W}}(i)$ whose columns are the ADC outputs from previous $\displaystyle\max_{k\in[s]}{\{n_{q,k}\}}-1$ channel uses, we can express the threshold vector at the $i^{\rm th}$ channel-use as
\begin{align}
    \tbf(i) = \mathbf{A}(i) {\rm vec}\left(\mathbf{W}(i)\right),
\end{align}
where ${\rm vec}(\cdot)$ is an operator whose output is a column vector resulting from concatenation of the columns of its input matrix. We refer to $\Abf(i)$ as the \textit{threshold coefficient} matrix that denotes the linear mapping for the thresholds.

As an example, for the receiver in Fig.~\ref{fig:simple}a, $s = 1$, ${\rm vec}\left(\mathbf{W}(i)\right) = [w_1(i-1)$,$w_2(i-1)$,$w_1(i)$, $w_2(i)]\tran$, and threshold coefficients matrix is $\mathbf{A} = [\mathbf{a}_1\tran$, $\mathbf{a}_2\tran]\tran$ with $\mathbf{a}_1(i) = (0,0,0,0)\tran$ and $\mathbf{a}_2(i) = (0.5,0,0,0)$. In this example, the threshold coefficient matrix $\Abf$ does not change with time. However, using a time-varying matrix $\Abf(i)$ becomes necessary in MIMO BC as it helps us extract more information from a signal over multiple channel uses. We will elaborate on this in Sec.~\ref{subsec:AT-Rx MT}.

Note that to adjust the thresholds, digital to analog converters (DACs) might be required. There are multiple ways that one can utilize DACs here. For example, one can use one-bit DACs for each element of ${\rm vec}\left(\mathbf{W}(i)\right)$ and then do the linear combination in analog, or do the linear combination in digital and use DACs with higher resolutions. Finding the best architecture from a power consumption perspective could be an interesting venue for future research. Here, we focus on the optimal utilization of one-bit ADCs to maximize the spectral efficiency.

\subsection{Achievable Single-User Rates}
\label{subsec:PtP_adap}

As shown in Fig.~\ref{fig:PtP_2}, the proposed adaptive threshold receiver first creates $s$ streams through linear spatial processing of channel outputs and then, processes these streams using a delay network and a set of one-bit ADCs with adaptive thresholds. This receiver allows us to effectively distribute the ADCs' resolutions among the streams. For example, given $s = 2$ and $n_q$ one-bit ADCs, we can effectively perform $n_{q,1}$ and $n_{q,2}$-bit quantization of the first and second stream, respectively, where $n_{q,1}+n_{q,2} = n_{q}$. Motivated by this observation, we provide a communication scheme for the single-user MIMO system described in Section \ref{sec:System Model}.

We consider singular value decomposition (SVD) of the channel, which results in a precoder at the transmitter and corresponding analog linear combiner at the receiver. As a result, the channel is transformed into $s$ parallel, non-interfering streams, where $s\leq {\rm rank}(\mathbf{H})$. Assuming that $n_{q,k}$ one-bit ADCs are allocated to the $k^{\rm th}$ stream, where $\sum_{k\in[s]} n_{q,k} = n_q$, the transmitter uses $2^{n_{q,k}}$-PAM with uniform distribution of the symbols to transmit over the stream. The receiver selects the threshold coefficients to recover the modulation points. The following theorem formulates the maximum achievable rate of this communication scheme. 

\begin{Theorem}
\label{th:PtP_adap}
Based on the above communication scheme, the achievable rate $R$, satisfies:
\begin{align}
\label{eq:rate_ptp_adp}
    R\leq \max \sum_{k =1}^{s} I(\widetilde{x}_k;\widetilde{y}_k),
\end{align}
where $s = \mathrm{rank}(\mathbf{H})$ is the number of streams, the maximum is taken over $(n_{q,k})_{k\in [s]}, (P_{k})_{k\in [s]}$ such that $\sum_{k\in [s]}n_{q,k} = n_q, \sum_{k\in[s]}P_{k}=P$, $\widetilde{x}_k = a_k\cdot\left(2\widehat{x}_k-1-2^{n_{q,k}}\right), \widetilde{y}_k = \sigma_{k} \widetilde{x}_k+\tilde{n}_k$, $a_k = \sqrt{\frac{3P_k}{2^{2n_{q,k}}-1}}$, $\widehat{x}_k$ is uniformly distributed over $[2^{n_{q,k}}]$, $\tilde{n}_k$s are i.i.d.  zero-mean Gaussian random variables with unit variance, and $\sigma_k$ is the $k^{\rm th}$ singular value of $\mathbf{H}$.
\end{Theorem}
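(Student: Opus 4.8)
The plan is to reduce the MIMO channel to a bank of independent scalar Gaussian channels via the singular value decomposition, identify the per-stream input--output relation induced by PAM transmission and adaptive-threshold quantization, and then bound the rate of each stream by the mutual information of its \emph{unquantized} scalar channel using the data-processing inequality.

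First I would write $\mathbf{H} = \mathbf{U}\boldsymbol{\Sigma}\mathbf{V}^T$ and use $\mathbf{V}$ as the transmit precoder ($\mathbf{x} = \mathbf{V}\widetilde{\mathbf{x}}$) and $\mathbf{U}^T$ as the linear analog combiner at the front end of the receiver in Fig.~\ref{fig:PtP_2}. Substituting into \eqref{eq:channel} and using orthogonality of $\mathbf{U}$ and $\mathbf{V}$ gives $\widetilde{\mathbf{y}} = \mathbf{U}^T\mathbf{y} = \boldsymbol{\Sigma}\widetilde{\mathbf{x}} + \widetilde{\mathbf{n}}$, where $\widetilde{\mathbf{n}} = \mathbf{U}^T\mathbf{n}$ remains i.i.d. zero-mean unit-variance Gaussian since $\mathbf{U}$ is orthogonal. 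Restricting to the nonzero singular values yields exactly the $s = \mathrm{rank}(\mathbf{H})$ parallel, non-interfering streams $\widetilde{y}_k = \sigma_k\widetilde{x}_k + \widetilde{n}_k$ of the statement. I would then verify that the prescribed input meets the power budget: since $\widehat{x}_k$ is uniform on $[2^{n_{q,k}}]$, the centered symbols $2\widehat{x}_k - 1 - 2^{n_{q,k}}$ are the standard $2^{n_{q,k}}$-PAM levels with second moment $(2^{2n_{q,k}}-1)/3$, so the scaling $a_k = \sqrt{3P_k/(2^{2n_{q,k}}-1)}$ gives $E[\widetilde{x}_k^2] = P_k$ and hence $\sum_k E[\widetilde{x}_k^2] = P$.

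The core argument is the per-stream bound. On the $k$-th stream the adaptive-threshold receiver holds the single received sample $\widetilde{y}_k$ in the delay network and extracts $n_{q,k}$ bits from it over $n_{q,k}$ channel uses, with each threshold a fixed linear function (through $\mathbf{A}(i)$) of the bits already extracted from the \emph{same} sample, exactly as in the successive-approximation scheme of Sec.~\ref{Ex:2}. The key observation is that all $n_{q,k}$ output bits are therefore a \emph{deterministic} function $g_k(\widetilde{y}_k)$ of the one continuous observation $\widetilde{y}_k$ (a single noise realization), so the Markov chain $\widetilde{x}_k \to \widetilde{y}_k \to g_k(\widetilde{y}_k)$ holds. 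The induced end-to-end channel is memoryless, so for the fixed uniform-PAM input the achievable rate of stream $k$ is $I(\widetilde{x}_k; g_k(\widetilde{y}_k))$, and the data-processing inequality gives $I(\widetilde{x}_k; g_k(\widetilde{y}_k)) \le I(\widetilde{x}_k;\widetilde{y}_k)$. Because the streams are independent (independent inputs, independent $\widetilde{n}_k$, separate ADC banks), the total rate is additive, $R = \sum_{k} I(\widetilde{x}_k; g_k(\widetilde{y}_k)) \le \sum_{k} I(\widetilde{x}_k;\widetilde{y}_k)$, and maximizing the right-hand side over the admissible allocations $(n_{q,k})$ with $\sum_k n_{q,k}=n_q$ and power splits $(P_k)$ with $\sum_k P_k = P$ yields \eqref{eq:rate_ptp_adp}.

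I expect the main obstacle to be rigorously justifying the reduction rather than the inequality itself: one must argue that (i) the SVD precoder/combiner is realizable as the linear analog front-end of Fig.~\ref{fig:PtP_2}, so the quantizer genuinely sees the diagonalized streams, and (ii) despite the multi-channel-use, feedback-in-threshold operation, each stream's bit vector depends on only one continuous sample, so the per-use channel is memoryless and the data-processing/additivity bookkeeping is valid (including the constant start-up delay of $\max_{k}\{n_{q,k}\}-1$ uses, which does not affect the rate). The PAM second-moment computation and the additivity over parallel channels are routine. It is also worth remarking that the bound is tight at high SNR, where each $I(\widetilde{x}_k;\widetilde{y}_k) \to n_{q,k}$ and the sum approaches $n_q$, recovering the optimal high-SNR rate claimed for this receiver.
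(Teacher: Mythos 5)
Your proposal is correct and matches the paper's approach: the paper's entire proof is the one-line citation ``follows as in \cite{tse2005fundamentals},'' i.e., exactly the standard SVD reduction to $s=\mathrm{rank}(\mathbf{H})$ parallel scalar Gaussian channels with per-stream PAM that you carry out. Your write-up in fact supplies the details the citation stands in for --- the orthogonality of $\mathbf{U},\mathbf{V}$ preserving the noise statistics, the second-moment check $E[\widetilde{x}_k^2]=P_k$ for the scaled PAM constellation, and the data-processing step $I(\widetilde{x}_k; g_k(\widetilde{y}_k)) \leq I(\widetilde{x}_k;\widetilde{y}_k)$ justifying the inequality in \eqref{eq:rate_ptp_adp} --- so it is a valid, more explicit rendering of the same argument rather than a different route.
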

\begin{proof}
The proof of the above theorem follows as in \cite{tse2005fundamentals}.
\end{proof}

As a result of this theorem, we have the following corollary:

\begin{Corollary}
\label{cor:ptp_adap}
The high SNR capacity of the adaptive threshold receiver is $n_q$ bits per channel-use. Therefore, the adaptive threshold receiver is optimal at high SNR.
\end{Corollary}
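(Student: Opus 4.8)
The plan is to sandwich the high-SNR rate between a converse that holds at every SNR and the achievability guaranteed by Theorem~\ref{th:PtP_adap}, and then to show that the achievable expression tends to $n_q$ as the SNR grows.

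For the converse, I would observe that the receiver emits exactly $n_q$ binary symbols per channel-use (one from each one-bit ADC), so over a block of $N$ channel uses the entire digitized record lies in a set of cardinality at most $2^{N n_q}$. By the data-processing inequality the mutual information between the message and this record is at most $N n_q$ bits, giving $C^{\mathrm{High~SNR}}\le n_q$ bits per channel-use; this is precisely the $\min\{n_q,C\}$ bound of the introduction once $C\to\infty$ at high SNR.

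For achievability I would invoke Theorem~\ref{th:PtP_adap}: the SVD-based scheme renders $s=\mathrm{rank}(\mathbf{H})$ parallel scalar Gaussian channels $\widetilde{y}_k=\sigma_k\widetilde{x}_k+\widetilde{n}_k$ and achieves $\sum_{k=1}^{s}I(\widetilde{x}_k;\widetilde{y}_k)$ for any ADC allocation with $\sum_{k}n_{q,k}=n_q$ and any power split with $\sum_k P_k=P$. Fixing such an allocation, each input is uniform over a $2^{n_{q,k}}$-point PAM constellation whose spacing scales with $a_k=\sqrt{3P_k/(2^{2n_{q,k}}-1)}$. The crux is the high-SNR limit of each term: since every $\sigma_k>0$ for $k\le\mathrm{rank}(\mathbf{H})$, driving $P\to\infty$ under a fixed power split sends each $a_k\to\infty$ and hence the received minimum distance $2\sigma_k a_k\to\infty$ against unit-variance noise, so a nearest-point rule errs with probability decaying like $\exp(-\Theta(\sigma_k^2 a_k^2))$. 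Fano's inequality then forces $H(\widetilde{x}_k\mid\widetilde{y}_k)\to 0$, whence $I(\widetilde{x}_k;\widetilde{y}_k)=H(\widetilde{x}_k)-H(\widetilde{x}_k\mid\widetilde{y}_k)\to\log 2^{n_{q,k}}=n_{q,k}$. Summing over $k$ yields an achievable rate approaching $\sum_{k=1}^{s}n_{q,k}=n_q$.

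Combining the two bounds gives $\lim_{\mathrm{SNR}\to\infty}C^{\mathrm{High~SNR}}=n_q$, establishing both the stated value and optimality. I expect the only subtle point to be ensuring all $s$ conditional-entropy terms vanish together as $P\to\infty$; bounding every active singular value below by the smallest nonzero one, $\sigma_s>0$, makes the Fano and Gaussian-tail estimates uniform across streams, after which the argument is routine.
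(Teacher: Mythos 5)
Your proposal is correct and follows essentially the same route as the paper, which states the corollary as an immediate consequence of Thm.~\ref{th:PtP_adap} (uniform $2^{n_{q,k}}$-PAM over the SVD streams, with each $I(\widetilde{x}_k;\widetilde{y}_k)\to n_{q,k}$ at high SNR) combined with the trivial $\min\{n_q,C\}$ cutset bound from having $n_q$ one-bit ADCs. Your explicit data-processing converse (valid even with adaptive thresholds, since they are deterministic functions of past ADC outputs) and the uniform Fano/minimum-distance argument merely fill in details the paper leaves implicit.
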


Note that due to practical constraints such as limited resolution of phase shifters, the accurate realization of SVD might not be feasible. This would cause power leakage among the streams and reduce the total rate of the system. To find a suitable substitute for the SVD matrix, one can use the analysis in \cite{Mo2017ADC}. The study of this effect is out of the scope of this paper and is left for future work. Also, one can show that finding the optimal power and ADC allocation in Thm.~\ref{th:PtP_adap} is an NP-hard problem \cite{bixby2004mixed}. We have investigated possible low complexity solutions to this problem in \cite{Abbas2020Thr}.

One advantage of the adaptive threshold receiver compared to hybrid blockwise receiver is that its high SNR capacity is linear in the number of ADCs irrespective of the channel rank. Recall that for the hybrid blockwise receiver, the high SNR capacity increases linearly only if $n_q\leq 2{\rm rank}(\mathbf{H})$. 

Next, we propose communication strategies and associated achievable rates for the MIMO MAC (uplink) and MIMO BC (downlink) communication systems employing adaptive threshold receivers.

\subsection{Achievable MAC Rate Region}
\label{subsec:AT-Rx MT}

Here, we use the same time-sharing scheme as in Sec.~\ref{subsec:MAC_hyb}. Assume that user $j$, where $j\in[n_u]$ transmits for $\eta_j,~0<\eta_j<1$ portion of the total channel uses with $\sum_{j\in[n_u]}\eta_j = 1$. If each user employs the communication strategy developed for the single-user case (Thm.~\ref{th:PtP_adap}), user $j$ achieves $\eta_j  R$ bits per channel-use, where $R$ is as in \eqref{eq:rate_ptp_adp}, which leads us to following result.
\begin{Corollary}
\label{cor:MAC_adap}
The adaptive threshold receiver achieves the high SNR sum-rate of $n_q$ bits per channel-use in MIMO MAC.
\end{Corollary}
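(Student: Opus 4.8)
The plan is to reduce the multi-user problem to the single-user result already established in Corollary~\ref{cor:ptp_adap}, using the time-sharing scheme described immediately above the statement. First I would observe that during the $\eta_j$ fraction of channel uses allocated to user $j$, every other user is silent, so by \eqref{eq:MACMIMO} the received signal is $\mathbf{y} = \mathbf{H}_j\mathbf{x}_j + \mathbf{n}$. The channel seen by the receiver is therefore exactly the single-user MIMO channel of Section~\ref{sec:System Model} with gain matrix $\mathbf{H}_j$, and the receiver may use all $n_q$ one-bit ADCs together with the SVD-based precoding and adaptive-threshold scheme of Theorem~\ref{th:PtP_adap} to decode user $j$. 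Since there is no interference during this interval, the single-user analysis applies verbatim.

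Next I would invoke Corollary~\ref{cor:ptp_adap}, which states that at high SNR this scheme attains $n_q$ bits per channel-use on a single-user link. Consequently, during its allocated interval user $j$ achieves a rate approaching $n_q$ bits per channel-use, and averaged over the whole transmission it contributes $\eta_j n_q$ bits per channel-use. Summing over users and using $\sum_{j\in[n_u]}\eta_j = 1$ yields a high SNR sum-rate of
\begin{align*}
\sum_{j\in[n_u]} \eta_j n_q = n_q \sum_{j\in[n_u]} \eta_j = n_q
\end{align*}
bits per channel-use, which is the desired achievability statement. To confirm that $n_q$ is also optimal, I would appeal to the elementary converse noted in the Introduction: a receiver with $n_q$ one-bit ADCs produces at most $n_q$ bits per channel-use, so the total information rate is bounded by $\min\{n_q, C\}$, which tends to $n_q$ as the SNR grows. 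Hence no scheme exceeds $n_q$, and the proposed scheme meets this bound.

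The main obstacle will be accounting for the constant start-up delay inherent to the adaptive-threshold receiver. As discussed in Section~\ref{sec:prp_adap}, each stream needs $\max_{k\in[s]}\{n_{q,k}\}-1$ channel uses of warm-up before attaining the optimal per-channel-use rate, and every switch between users resets the delay network. I would handle this by letting each user's contiguous transmission block grow with the total blocklength, so that the fixed per-block delay is amortized; formally, user $j$'s achieved rate is $\eta_j n_q\bigl(1-o(1)\bigr)$, with the $o(1)$ term vanishing as the transmission length tends to infinity. The remaining details are routine.
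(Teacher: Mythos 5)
Your proposal is correct and follows essentially the same route as the paper: time-sharing with fractions $\eta_j$, invoking the single-user scheme of Thm.~\ref{th:PtP_adap} and Cor.~\ref{cor:ptp_adap} during each user's slot, and summing $\sum_{j\in[n_u]}\eta_j n_q = n_q$. Your additional remarks---the converse via the $\min\{n_q, C\}$ bound and the amortization of the constant start-up delay over growing per-user blocks---are sound refinements that the paper leaves implicit but do not change the argument.
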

For the case of MIMO BC, the following example describes the key  ideas we use to establish an achievable rate region in Sec.~\ref{sec:BC_adap}.  

\subsection{Example: BC with Adaptive Threshold Receiver}
\label{Ex:3}

Consider the two-user BC channel shown in Fig. \ref{fig:BC}, where $n_t=2,n_{r,1}=n_{r,2}=n_{q,1}=n_{q,2}=1$, and $\mathbf{H}_1=\mathbf{H}_2=
(1, 1)$.

Similar to the hybrid blockwise receiver example in  Sec.~\ref{Ex:BC_hyb}, the main idea is to use the one-bit ADCs to continue extracting information from previously observed channel outputs when there are no new observations. Consider the adaptive threshold receiver in Fig.~\ref{fig:simple}b. Let us use ADC$_1$ for the first user and ADC$_2$ for the second user. If we transmit to the first user in odd channel uses and to the second user in the even channel uses, they each can generate two bits per observed channel output over two channel uses and achieve the optimal high SNR rate of one bit per channel-use.

In general, over a blocklength of $\ell$, we can transmit to the first and second user over $\eta \ell$ and $(1-\eta)\ell$ channel uses, respectively. Then, we can utilize adaptive thresholds to perfom $\floor*{{1}/{\eta}}$ (or $\floor*{{1}/{(1-\eta)}}$ )-bit quantization of each of the observed channel outputs during $\floor*{{1}/{\eta}}$ (or $\floor*{{1}/{(1-\eta)}}$) channel uses at the first (second) receiver as discussed in the example in Sec.~\ref{Ex:2}. This way both users still achieve the optimal high SNR rate of one bit per channel-use.

The same idea can be applied for more than two users with more one-bit ADCs to achieve optimal high SNR rates per user. For example, when there are $n_u$ users equipped with $n_q$ one-bit ADCs, we can do time-sharing among the users, where over each block of length $\ell$ we transmit to user $j \in[n_u]$ for $\eta_j\ell$ channel uses. And at user $j$, we perform $\floor*{{n_q}/{\eta_j}}$ bit quantization of each channel output observation over $\floor*{{n_q}/{\eta_j}}$ channel uses. To elaborate, let us consider all users are equipped with an adaptive threshold receiver similar to that in Fig.~\ref{fig:simple}b, where each ADC processes a signal for $\floor*{{n_q}/{\eta_j}}$ channel uses to extract $\floor*{{n_q}/{\eta_j}}$ bits as discussed in Sec.~\ref{Ex:2}. Since there are $n_q$ ADCs at the receivers, a receiver can process $n_q$ channel outputs in parallel and hence achieve the optimal rate of ${\eta_j}\floor*{{n_q}/{\eta_j}}$ bits per channel use at high SNR.

\subsection{Achievable BC Rate Region}
\label{sec:BC_adap}

Generalizing the example in Sec.~\ref{Ex:3} to MIMO BC with arbitrary SNR, we use a time-sharing scheme, where over a blocklength of $\ell$, we transmit for $\eta_j \ell$ channel uses to receiver $j$, where $j\in[n_u]$, $0<\eta_{j}<1$, and $\sum_{j\in[n_u]}\eta_j = 1$. Each user employs the adaptive threshold receiver of Sec.~\ref{sec:arch} and a communication strategy similar to that of Thm.~\ref{th:PtP_adap}. However, receiver $j$ process every $\eta_j\ell$ channel observations during $\ell$ channel uses. To elaborate, consider the first user, we use SVD to divide its channel into $s_1$ parallel streams and allocate $n_{q,1,k}$ one-bit ADCs to the $k^{\rm th}$ stream, where $\sum_{k\in[s_1]} n_{q,1,k} = n_{q,1}$. Then, we perfom $\floor*{n_{q,1,k}/\eta_1}$-bit quantization over the $k^{\rm th}$ stream by processing each received signal during $\floor*{n_{q,1,k}/\eta_1}$ channel uses similar to the example in Sec.~\ref{Ex:3}. Therefore, we can decode  $2^{\floor*{n_{q,1,k}/\eta_1}}$-PAM over stream $k$. This leads to the following theorem.
\begin{Theorem}
\label{th:BC}
Based on the above communication scheme, the achievable rate $R_{j}$ of user $j$, where $j\in[n_u]$, satisfies:
\begin{align}
\label{eq:BCR}
\begin{aligned}
   & R_j \leq \eta_j
   \max \sum_{k \in[s_j]} I(\widetilde{x}_{j,k}, \widetilde{y}_{j,k}),
\end{aligned}
\end{align}
where $s_j, j\in [n_u]$ is the number of singular values of $\mathbf{H}_j$, the maximum is taken over $(n_{q,j,k})_{k\in [s_j]}$, $(P_{j,k})_{k\in [s_j]}$ such that $\sum_{k\in [s]}n_{q,j,k} = n_{q,j}, \sum_{k\in[s_j]}P_{j,k}=P_j$, $\widetilde{x}_{j,k} = a_{j,k}\left(2\widehat{x}_{j,k}-1-2^{\floor*{{n_{q,j,k}}/{\eta_j}}}\right)$, $\widetilde{y}_{j,k} = \sigma_{j,k} \widetilde{x}_{j,k}+n_{j,k}, k\in [s_j], j\in \{1,2\}$, $a_{j,k} = \sqrt{\frac{3P_{j,k}}{2^{2\floor*{{n_{q,j,k}}/{\eta_j}}}-1}}$, $\widehat{x}_{j,k}$ is uniformly distributed over $\left[2^{\floor*{{n_{q,j,k}}/{\eta_j}}}\right]$, and $\sigma_{j,k}$ is the $k^{\rm th}$ singular value of $\mathbf{H}_j$. 
\end{Theorem}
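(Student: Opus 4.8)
The plan is to reduce the broadcast problem to $n_u$ parallel single-user problems via time-sharing, and then to invoke the single-user achievability of Thm.~\ref{th:PtP_adap} on each user's time-share, with a single modification: the channel uses during which a given receiver is not being served allow a strictly higher quantization resolution, exactly as in the example of Sec.~\ref{Ex:3}. Concretely, I would show that stream $k$ of user $j$ can be driven at the effective resolution $m_{j,k}=\floor*{n_{q,j,k}/\eta_j}$, and that the resulting per-stream point-to-point rate is $I(\widetilde{x}_{j,k};\widetilde{y}_{j,k})$, so that after collecting the time-sharing factor $\eta_j$ and summing over streams one obtains \eqref{eq:BCR}.

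First I would fix the time-shares $(\eta_j)_{j\in[n_u]}$ with $0<\eta_j<1$ and $\sum_j\eta_j=1$, and partition each block of $\ell$ channel uses so that the transmitter serves receiver $j$ during $\eta_j\ell$ of them. Since the serving is done in disjoint time slots, during receiver $j$'s slots only user $j$ is active, so the transmitter may apply the precoder matched to the SVD of $\mathbf{H}_j$; this decomposes user $j$'s channel into $s_j=\mathrm{rank}(\mathbf{H}_j)$ parallel, non-interfering scalar Gaussian sub-channels with gains $\sigma_{j,k}$, and the matched analog combiner at receiver $j$ realizes the corresponding $s_j$ streams. This is exactly the parallel-channel decomposition underlying Thm.~\ref{th:PtP_adap}, so no inter-user interference survives and each stream can be treated independently.

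Second --- the crux --- I would verify that each of the $\eta_j\ell$ fresh observations per block on stream $k$ can be quantized to $m_{j,k}=\floor*{n_{q,j,k}/\eta_j}$ bits using only the $n_{q,j,k}$ ADCs allocated to that stream. The mechanism is the one in Sec.~\ref{sec:prp_adap}: a fresh observation on stream $k$ arrives only in an $\eta_j$ fraction of the channel uses, so the adaptive-threshold ADCs can defer bit extraction into the idle slots and keep bisecting a stored observation's decision interval there. A bit-budget count makes this rigorous: over $\ell$ channel uses the $n_{q,j,k}$ ADCs deliver $n_{q,j,k}\ell$ one-bit outputs, whereas the target is $\eta_j\ell\, m_{j,k}$ bits, and since $m_{j,k}\le n_{q,j,k}/\eta_j$ the floor guarantees $\eta_j m_{j,k}\le n_{q,j,k}$. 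Thus the total ADC capacity suffices, and a delay network as in Sec.~\ref{sec:prp_adap} (now of length $\max_k m_{j,k}-1$) holds each observation long enough for the backlog to be cleared in the idle slots. At high SNR the successive thresholds then recover the $2^{m_{j,k}}$-PAM constellation exactly, so stream $k$ supports $2^{m_{j,k}}$-PAM with spacing set by $a_{j,k}$.

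Finally I would identify the per-stream effective channel. With the uniform $2^{m_{j,k}}$-PAM input $\widetilde{x}_{j,k}$ of power $P_{j,k}$ passed through gain $\sigma_{j,k}$ and unit-variance noise, the reliable rate per fresh observation is the mutual information $I(\widetilde{x}_{j,k};\widetilde{y}_{j,k})$ of the scalar channel stated in the theorem, with the normalization $a_{j,k}=\sqrt{3P_{j,k}/(2^{2m_{j,k}}-1)}$ being the standard uniform-PAM power constraint. Because receiver $j$ obtains such an observation only in an $\eta_j$ fraction of channel uses, each stream contributes $\eta_j I(\widetilde{x}_{j,k};\widetilde{y}_{j,k})$ to the rate; summing over the $s_j$ streams and maximizing over the admissible allocations $\sum_k n_{q,j,k}=n_{q,j}$ and $\sum_k P_{j,k}=P_j$ yields \eqref{eq:BCR}. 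I expect the main obstacle to be the second step: making the adaptive-threshold scheduling precise, i.e.\ arguing that every observation receives its full $m_{j,k}$-bit quantization using only $n_{q,j,k}$ ADCs by deferring extraction into the idle slots (feasible precisely because $\eta_j m_{j,k}\le n_{q,j,k}$) and that the induced map is the exact $m_{j,k}$-bit uniform quantizer at high SNR. The remaining steps are the standard parallel-Gaussian/PAM argument already invoked for Thm.~\ref{th:PtP_adap}.
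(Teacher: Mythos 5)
Your proposal is correct and follows essentially the same route as the paper, which establishes Thm.~\ref{th:BC} purely by the scheme described before the statement --- time sharing with shares $\eta_j$, SVD decomposition of $\mathbf{H}_j$ into $s_j$ parallel streams, allocation of $n_{q,j,k}$ ADCs per stream with adaptive thresholds performing $\floor*{n_{q,j,k}/\eta_j}$-bit quantization of each observation over $\floor*{n_{q,j,k}/\eta_j}$ channel uses, and the per-stream uniform-PAM mutual information obtained exactly as in Thm.~\ref{th:PtP_adap} (via the standard parallel-Gaussian argument of \cite{tse2005fundamentals}); your explicit bit-budget count $\eta_j m_{j,k}\leq n_{q,j,k}$ is the feasibility observation the paper makes only informally in Sec.~\ref{Ex:3}. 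One small caveat: your claimed delay-network length $\max_k m_{j,k}-1$ is valid when the time shares are interleaved at the symbol level (as in the paper's odd/even-channel-use example), whereas with contiguous $\eta_j\ell$-use slots the backlog during a burst forces a longer delay line --- this affects only the receiver's buffering, not the achievable rate in \eqref{eq:BCR}.
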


We then have the following corollary:
\begin{Corollary}
\label{cor:BC_adap}
The adaptive threshold receiver achieves the high SNR rate of $n_{q,j}$ bits per channel-use for user $j, j\in[n_u]$ in MIMO BC, respectively.
\end{Corollary}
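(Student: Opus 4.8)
The plan is to start from the achievable-rate expression in Theorem~\ref{th:BC}, show that as the transmit power grows each per-stream mutual information saturates at the entropy of its PAM label, and then choose the time-sharing fractions so that the floor operation becomes exact and the sum collapses to $n_{q,j}$. Since the introduction already notes that $n_{q,j}$ is an outer bound on any rate achievable with $n_{q,j}$ one-bit ADCs at receiver $j$, establishing achievability of $n_{q,j}$ simultaneously proves optimality.

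First I would analyze the induced point-to-point term $I(\widetilde{x}_{j,k};\widetilde{y}_{j,k})$ in the high-SNR limit. Because $\widetilde{x}_{j,k}$ is a deterministic, invertible function of the uniform label $\widehat{x}_{j,k}\in[2^{\lfloor n_{q,j,k}/\eta_j\rfloor}]$, we have $I(\widetilde{x}_{j,k};\widetilde{y}_{j,k})=I(\widehat{x}_{j,k};\widetilde{y}_{j,k})=H(\widehat{x}_{j,k})-H(\widehat{x}_{j,k}\mid\widetilde{y}_{j,k})$. In $\widetilde{y}_{j,k}=\sigma_{j,k}\widetilde{x}_{j,k}+n_{j,k}$ the adjacent constellation points are separated by $2a_{j,k}\sigma_{j,k}$, and since $a_{j,k}=\sqrt{3P_{j,k}/(2^{2\lfloor n_{q,j,k}/\eta_j\rfloor}-1)}\to\infty$ as $P_{j,k}\to\infty$ while the noise keeps unit variance, a nearest-point detector recovers $\widehat{x}_{j,k}$ with error probability vanishing in the SNR. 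Hence $H(\widehat{x}_{j,k}\mid\widetilde{y}_{j,k})\to0$ and $I(\widetilde{x}_{j,k};\widetilde{y}_{j,k})\to H(\widehat{x}_{j,k})=\lfloor n_{q,j,k}/\eta_j\rfloor$. Substituting into Theorem~\ref{th:BC} yields $R_j\to\eta_j\sum_{k\in[s_j]}\lfloor n_{q,j,k}/\eta_j\rfloor$ at high SNR.

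Next I would remove the floor by a deliberate choice of parameters. Concentrating user $j$'s entire budget on a single stream (setting $n_{q,j,k}=0$ for all but one $k$, which is an admissible point of the maximization in Theorem~\ref{th:BC}) reduces the sum to $\eta_j\lfloor n_{q,j}/\eta_j\rfloor$. I would then select $\eta_j=n_{q,j}/N$ with $N=\sum_{i\in[n_u]}n_{q,i}$; this obeys $\sum_j\eta_j=1$ and, for $n_u\ge2$, $0<\eta_j<1$, while making $n_{q,j}/\eta_j=N\in\mathbb{N}$, so the floor is exact and $R_j\to\eta_j N=n_{q,j}$. This is exactly the mechanism seen in the two-user computation of Sec.~\ref{Ex:3}, where $n_{q,1}=n_{q,2}=1$, $N=2$, $\eta_1=\eta_2=\tfrac12$, and each receiver performs $2$-bit quantization over two channel uses to reach one bit per channel-use.

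The main obstacle I anticipate is making the per-stream saturation rigorous: I must verify that $H(\widehat{x}_{j,k}\mid\widetilde{y}_{j,k})\to0$, which is a standard Fano-type argument from the vanishing symbol-error probability of PAM detection, but the limit has to be taken uniformly over the finitely many streams $k\in[s_j]$ and for the specific $\eta_j$ chosen, so that it commutes with the finite sum over $k$. Clearing the floor is then routine arithmetic; the only point requiring care is confirming that the closed-form choice $\eta_j=n_{q,j}/N$ keeps every $\eta_j$ strictly inside $(0,1)$, which holds precisely when $n_u\ge2$ (the degenerate case $n_u=1$ reducing to the single-user statement of Cor.~\ref{cor:ptp_adap}).
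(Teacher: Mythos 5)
Your proposal is correct and follows essentially the same route as the paper: the corollary is obtained from Theorem~\ref{th:BC} by letting the per-stream PAM mutual informations saturate at their label entropies $\floor*{n_{q,j,k}/\eta_j}$ at high SNR and then choosing the time shares so the floors are exact, which is precisely the mechanism illustrated in Sec.~\ref{Ex:3} (two users, $\eta_1=\eta_2=\tfrac12$, $2$-bit quantization per stored sample). Your explicit choice $\eta_j=n_{q,j}/\sum_{i\in[n_u]}n_{q,i}$ together with concentrating the ADC budget on one stream is a clean way to make rigorous the parameter selection that the paper leaves implicit, but it is a refinement of, not a departure from, the paper's argument.
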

\subsection{Observations}
The following observations are based on Thm.~\ref{th:PtP_adap}, and \ref{th:BC} and Cor.~\ref{cor:ptp_adap}, \ref{cor:MAC_adap}, and \ref{cor:BC_adap}:
\\\noindent {\bf I)} The single-user high SNR capacity of the adaptive threshold receiver is $n_q$. This improves the high SNR achievable rate compared to hybrid one-shot and hybrid blockwise receivers that only achieve $n_q$ for $n_q \leq \mathrm{rank}(\mathbf{H})$ and $ n_q \leq 2\mathrm{rank}(\mathbf{H})$, respectively.
\\\noindent {\bf II)} The adaptive threshold receiver achieves optimal high SNR sum-rate and per user rate in MIMO MAC and BC, respectively, irrespective of the channel ranks. This improves the high SNR achievable rate region compared to the hybrid blockwise receiver which requires conditions on the channel ranks (Cor.~\ref{cor:MAC_hyb}, and \ref{cor:BC_hyb}) to achieve the same performance.
\section{Numerical Evaluations and Simulation Results}
\label{sec:numerical_results}

In this section, we provide numerical analysis and simulation results for the proposed receivers and transmission schemes in Sections \ref{sec:arch} and \ref{Sec:FBW}. We first provide high SNR numerical analysis of the achievable rates of the hybrid blockwise and adaptive threshold receiver. Then, we provide simulation of the achievable rates of the adaptive threshold receiver in single-user and multi-user communication scenarios.

\subsection{High SNR Achievable Rate}

\begin{figure*}[t]
\centering
\begin{subfigure}{0.45\linewidth}
    \centering
    \includegraphics[width=\textwidth,draft=false]{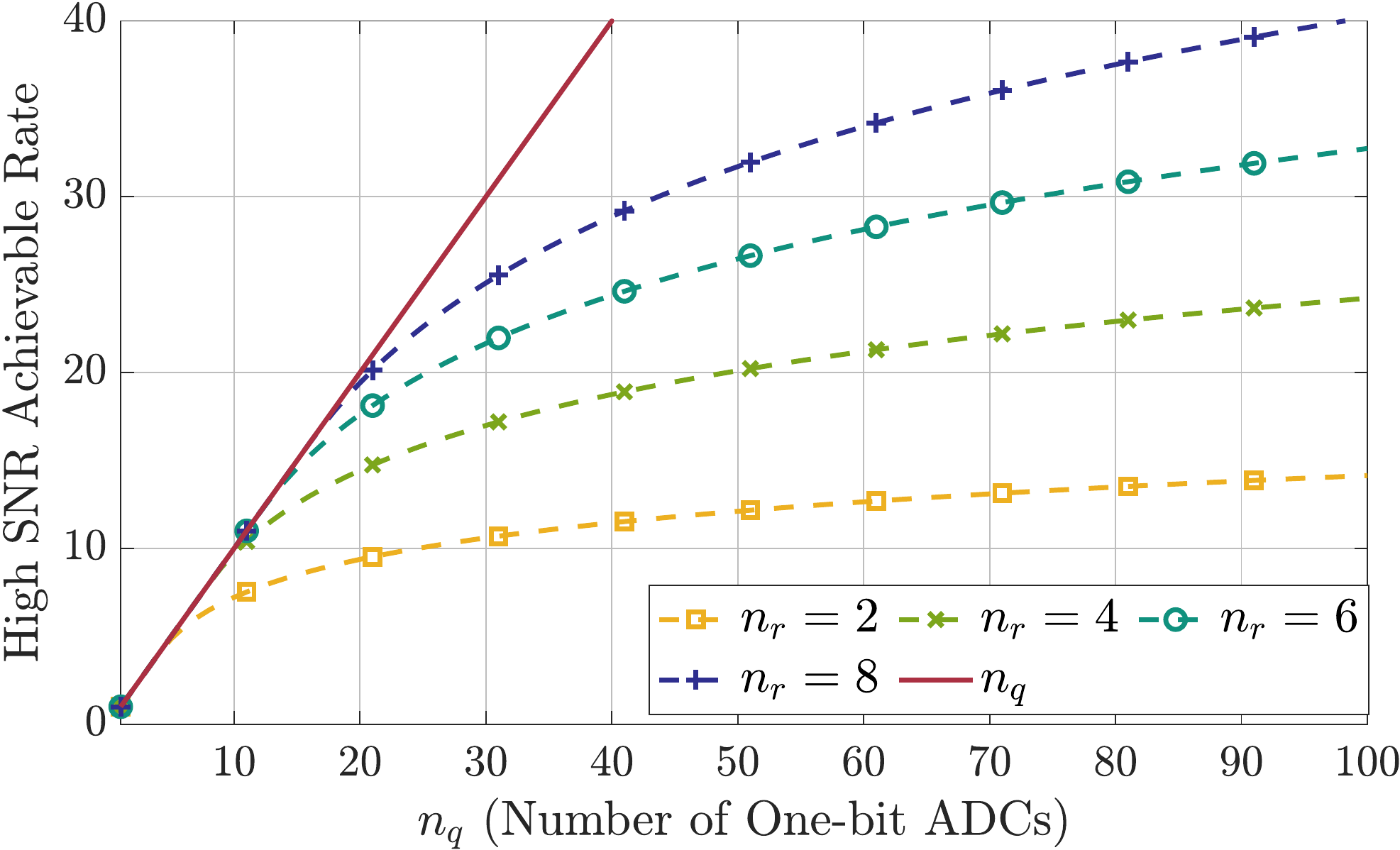}
    \caption{}
    \label{fig:infinite Delay}
\end{subfigure}
\begin{subfigure}{0.45\linewidth}
    \centering
 \includegraphics[width=\textwidth,draft=false]{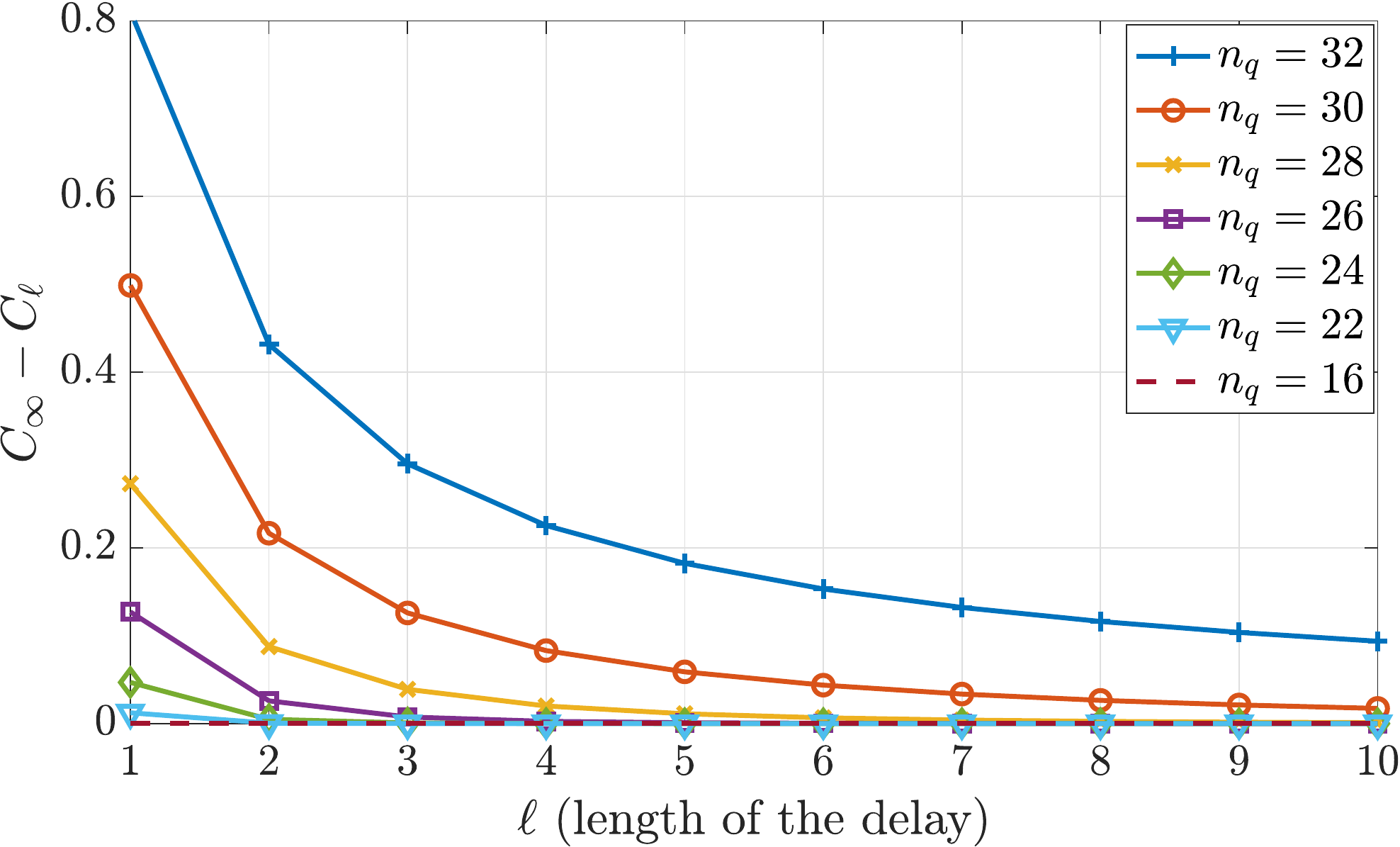}
    \caption{}
    \label{fig:finite delay}
\end{subfigure}
\caption{(a) Achievable high SNR rate of the hybrid blockwise receiver for large $\ell$ (dotted lines) for a MIMO system with $n_t=10$ and $n_r\in \{2,4,6,8\}$. The solid line, $R=n_q$ is the capacity upper bound achievable by adaptive threshold receiver. (b) High SNR rate-loss of the hybrid blockwise receiver when $\ell$ is finite assuming $\mathrm{rank}(\mathbf{H}) = 16$.}
\end{figure*}

\begin{figure}[t]
\centering
\includegraphics[width=0.45\textwidth,draft=false]{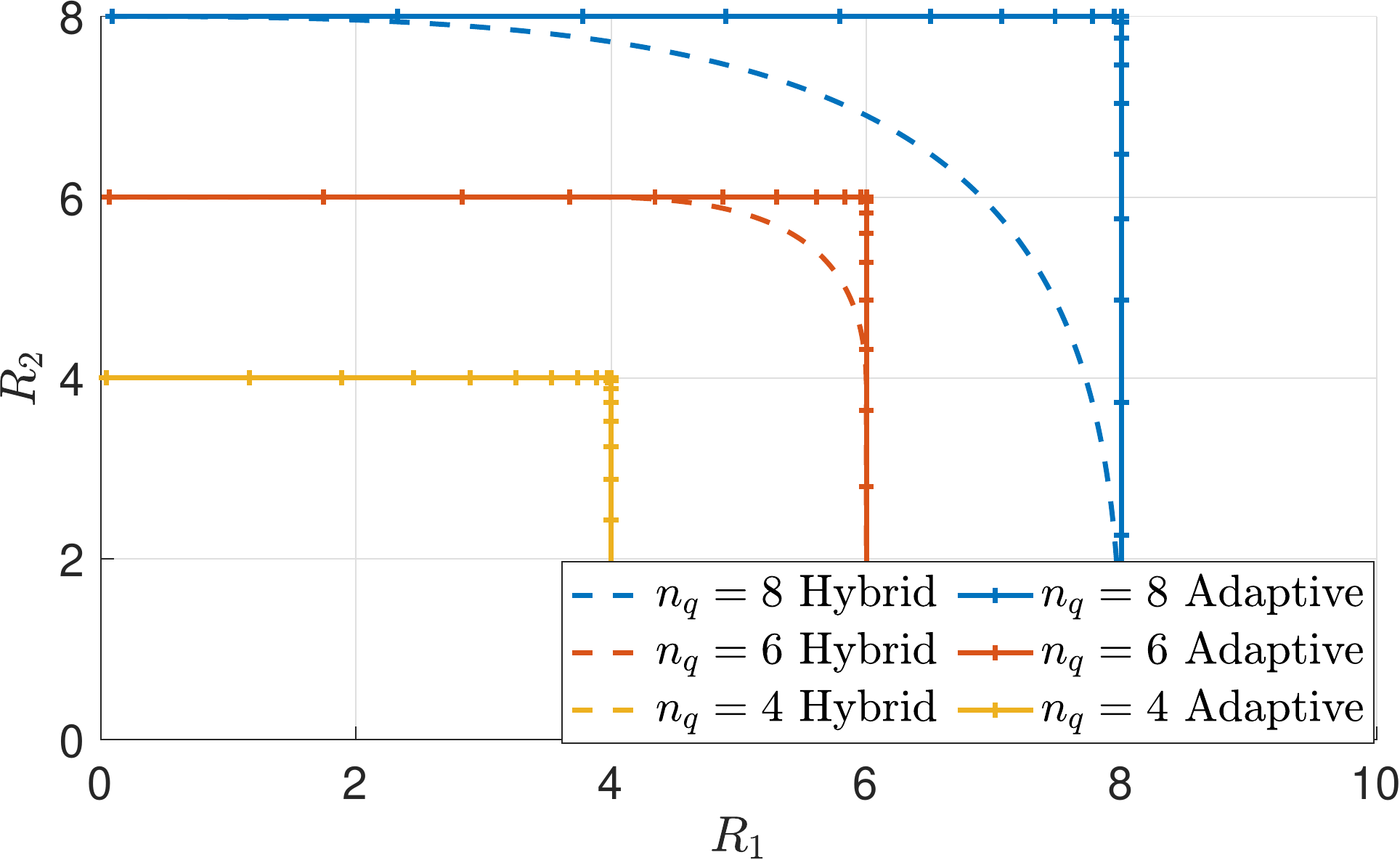}
\caption{Achievable high SNR rate region for the hybrid blockwise receiver (dashed lines) and adaptive threshold receiver (solid lines) for two-user MIMO BC employing $n_t = 10$ and identical receivers with $n_r = 4$, $n_q\in\{4,6,8\}$.}
\label{fig:BC_highSNR}
\end{figure}

We first illustrate Thm.~\ref{th:PtP}, that is high SNR achievable rates of the proposed hybrid blockwise receiver for single-user MIMO. The achievable rates (doted lines) are plotted in Fig. \ref{fig:infinite Delay} for large $\ell$ and different numbers of received antennas and quantizers assuming that the channel is full rank (i.e., $\mathrm{rank}(\mathbf{H}) =\min\{n_t, n_r\} $). Capacity upper bound $n_q$ is also shown for comparison. Note this upper bound is achievable when adaptive threshold receiver is used (Thm.~\ref{th:PtP_adap}). We observe that for a fixed number of transmit $n_t$ and receive $n_r$ antennas, as the number of one-bit ADCs $n_q$ is increased, the maximum high SNR rate with the adaptive threshold receiver always grows linearly with $n_q$. However, with the hybrid blockwise receiver the maximum high SNR rate increases linearly if $n_q\leq 2 \mathrm{rank}(\mathbf{H})$, logarithmically otherwise.

Next, we investigate the effect of delay $\ell$ on the high SNR achievable rate of the hybrid blockwise receiver (Thm.~\ref{th:PtP}). Fig. \ref{fig:finite delay}, illustrates the difference between the achievable high SNR rate of the hybrid blockwise receiver for large $\ell$ and when $\ell$ is finite. For $n_q\leq 16$, we have $n_q \leq {\rm rank}(\mathbf{H})$ which means that hybrid one-shot receiver ($\ell = 0$) would achieve the high SNR capacity of $n_q$ bits per channel use and so no temporal processing is needed.  
Furthermore, we observe that for $16<n_q \leq 32$, a delay length of $\ell =10$ is enough to get an achievable high SNR rate within $0.1$ bits per channel-use of the asymptotic case. As a result, using a finite value of $\ell$ is enough to achieve performance close to the infinite case.

Figure \ref{fig:BC_highSNR} shows the achievable high SNR rate region for two-user MIMO BC with hybrid blockwise (for large $\ell$) and adaptive threshold receivers. We observe that the high SNR rate region is a square for the adaptive threshold receiver as this receiver can always achieve the optimal high SNR rate per users. However, the hybrid blockwise receiver rate region can be smaller as it requires certain conditions on the channel ranks (Cor.~\ref{cor:MAC_hyb}, and \ref{cor:BC_hyb}) to achieve the same performance.

\begin{figure*}[t]
\centering
\begin{subfigure}{0.45\linewidth}
    \centering
    \includegraphics[width=\textwidth,draft=false]{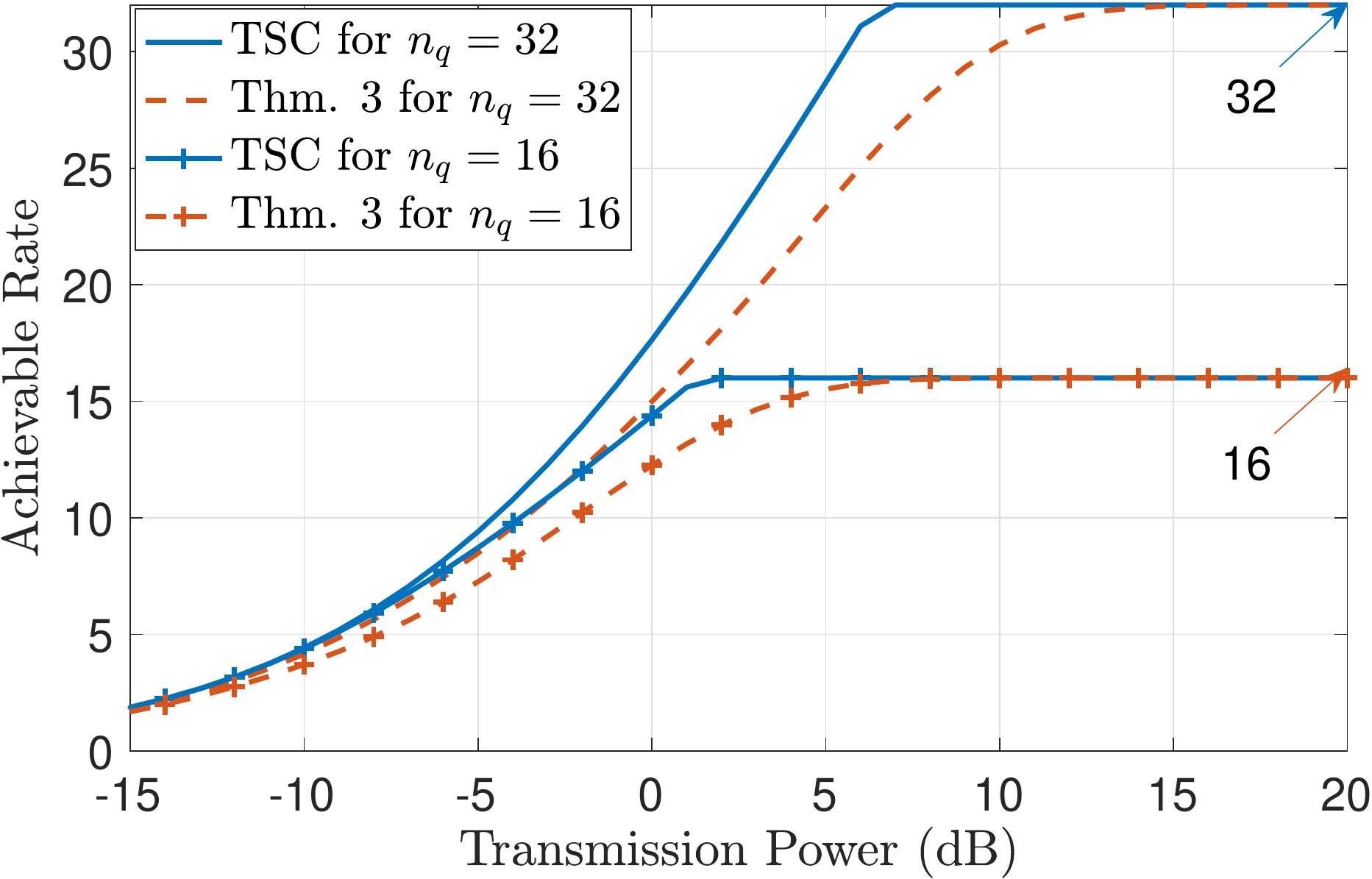}
    \caption{}
\label{fig:PtP_ns}
\end{subfigure}
\begin{subfigure}{0.45\linewidth}
    \centering
\includegraphics[width=\textwidth,draft=false]{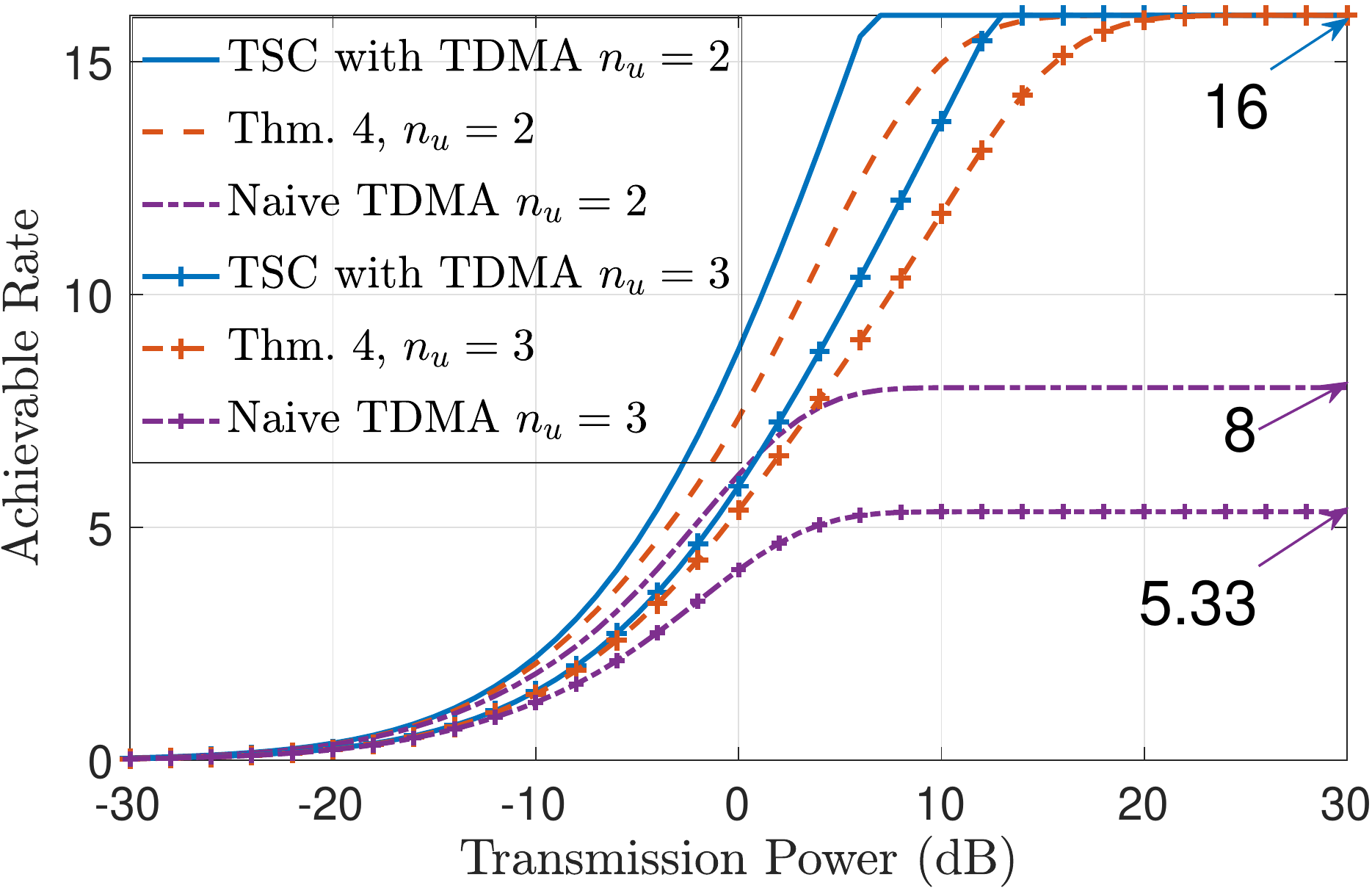}
    \caption{}
\label{fig:BC_sim2}
\end{subfigure}
\caption{(a) Achievable rate of a single-user MIMO system with the adaptive threshold receiver useing different number of ADCs $n_q = \{32,16\}$. TSC stands for truncated Shannon capacity. (b) Achievable rate per user for $n_u = \{2,3\}$ user BC with adaptive threshold receivers employing $n_q = 16$ one-bit ADCs.}
\end{figure*}
\subsection{Simulation Results for the Adaptive Threshold Receiver}

In this subsection, we study achievable rate regions for the adaptive threshold receiver (Section \ref{Sec:FBW}). We consider the single-user and BC MIMO scenarios. We assume a Rayleigh fading channel, where each element of the channel matrix is Gaussian with zero mean and variance one and average the simulated rate regions over $10000$ channel realizations. For simulation of a practical mmWave system, we refer the reader to \cite{Abbas2020Thr}.

The proposed transmission schemes in Thm.~\ref{th:PtP_adap} and Thm.~\ref{th:BC} use SVD to transform the MIMO channel into a set of parallel streams and then distribute the transmit power and ADCs among them. Finding the optimal allocation of the transmit power and ADCs is equivalent to a mixed integer programming problem which is NP-hard \cite{bixby2004mixed}. In this paper, we perform waterfilling at the transmitter for the power allocation and perform exhaustive search over all ADC allocations at the receiver. A more complete analysis of different transmit power and ADC allocation methods suitable for a practical mmWave setting is provided in \cite{Abbas2020Thr}. 

For our first simulations, we use the results in Thm.~\ref{th:PtP_adap} to investigate the achievable rate of the adaptive threshold receiver in the single-user scenario. 
Fig. \ref{fig:PtP_ns} illustrates the achievable rate of a single-user MIMO communication system consisting of a user with $n_t = 16$, $n_r = 32$ and varying $n_q$. We use the upper bound $\min\{n_q, C\}$ also known as truncated Shannon capacity (TSC), where $C$ is the Shannon capacity of MIMO channel without quantization constraints. We observe that the proposed transmission scheme achieves the optimal high SNR rate of $n_q$ bits per channel-use, and that the performance is close to the optimal (TSC) in all SNRs.

Next, we plot the achievable rate of the adaptive threshold receiver in MIMO BC (Thm.~\ref{th:BC}). Fig. \ref{fig:BC_sim2} shows the achievable per user rate in two and three user MIMO BC communication systems, where the users are equipped with adaptive threshold receivers with $n_q = 16$. We compare the achievable rate with the truncated Shannon capacity with equal time shares, namely $\min\{n_q, C\}/n_u$, where $n_u$ is the number of users. We observe that the optimal high SNR achievable rate of $n_q = 16$ bits per channel-use is achieved for each user irrespective of the number of the users. Furthermore, we see that the proposed transmission scheme performs close to a BC system equipped with high resolution ADCs at the users, where the users employ TDMA with equal time-shares. This suggests that using adaptive threshold receiver with one-bit ADCs has the potential of reducing the power consumption with only a marginal degradation of the achievable rate of the system. 

An alternative strategy for BC is to use TDMA among users and then use the single-user MIMO  scheme as in Thm.~\ref{th:PtP_adap}. We call this approach naive TDMA. Compared to the scheme in Thm.~\ref{th:BC}, naive TDMA results in less energy consumption, since ADCs at each receiver are only active when the transmitter is sending data to the corresponding user. However, we can observe from Fig. \ref{fig:BC_sim2} that naive TDMA leads to a significant rate-loss compared to the proposed BC scheme in Thm.~\ref{th:BC}.

\section{Conclusion}
\label{sec:conclusion}
In this paper, we studied single-user and multiterminal communication scenarios over MIMO channels when a limited number of one-bit ADCs are available at the receiver terminals. We proposed two families of receivers, namely \textit{hybrid blockwise} and \textit{adaptive threshold}, which use a sequence of delay elements to allow for blockwise analog processing of the channel outputs. At high SNR, given a fixed number of one-bit ADCs, we have shown that the proposed receivers achieve the maximum transmission rate/rate region among all receivers with the same number of one-bit ADCs. An interesting direction for future work would be to also limit the number of low resolution DACs and optimize the transmitter and receiver architecture. Another direction is to find closed form expressions for the capacity of the hybrid one-shot and blockwise receivers for all SNRs.
\bibliography{main.bbl}

\appendices
\section{ Proof of Thm.~\ref{th:PtP}}
\label{app:th1}
To prove the theorem, we make use of the following proposition

\begin{Proposition}
\label{Prop:nchoosek}
Let $n\in \mathbb{N}$ and $\lambda\in (0,1)$ such that
\begin{align}
\label{eq:alpha}
    \frac{1}{2}\left(1-\sqrt{1-\frac{1}{3n}} \right)< \lambda
    < \frac{1}{2}\left(1+\sqrt{1-\frac{4}{12n+1}}\right),
\end{align}
Then, the following equality holds:
\begin{align}
\label{Eq:nchoosek}
\log{{n \choose \lambda n}}= n \mathrm{h_b}(\lambda)
-\frac{1}{2}\log{n}
-\frac{1}{2}\log{2\pi\lambda(1-\lambda)}
+O\left(1\right).
\end{align}
Particularly, for asymptotically large $n$, \eqref{Eq:nchoosek} holds for any fixed $\lambda\in (0,1)$. Furthermore, for $\lambda\in (0,\frac{1}{2})$, we have
\begin{align}
\label{Eq:nchoosek2}
\log{\sum_{i=0}^{\lambda n}{n \choose i}}= n \mathrm{h_b}(\lambda)
-\frac{1}{2}\log{n}
-\frac{1}{2}\log{\frac{2\pi\lambda(1-2\lambda)^2}{1-\lambda}}
+O\left(1\right),
\end{align}
where $\mathrm{h_b}(\cdot)$ is the binary entropy. Moreover, for a fixed $k$ and asymptotically large $n$
\begin{align}
\label{Eq:nchoosek3}
\log{\sum_{i=0}^{k}{n \choose i}}= k\log{n} +O\left(1\right),
\end{align}
\end{Proposition}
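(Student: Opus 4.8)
The plan is to derive all three displays from a single tool — a precise, non-asymptotic form of Stirling's formula — together with elementary bounds on ratios of consecutive binomial coefficients. Throughout I would write $\binom{n}{\lambda n}=n!/((\lambda n)!\,((1-\lambda)n)!)$ and work with Robbins' inequalities $\frac{1}{12m+1}<\log m!-\big(m\log m-m+\tfrac12\log(2\pi m)\big)<\frac{1}{12m}$, valid for every integer $m\ge 1$.

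First, for \eqref{Eq:nchoosek}, I would substitute Robbins' expansion for each of the three factorials. The ``main'' terms combine exactly: the $m\log m$ contributions telescope into $-n(\lambda\log\lambda+(1-\lambda)\log(1-\lambda))=n\,\mathrm{h_b}(\lambda)$, the linear $-m$ terms cancel, and the $\tfrac12\log(2\pi m)$ terms collapse to $-\tfrac12\log n-\tfrac12\log(2\pi\lambda(1-\lambda))$. What remains is the correction $E=r_n-r_{\lambda n}-r_{(1-\lambda)n}$, where each $r_m\in(\frac{1}{12m+1},\frac{1}{12m})$. The role of hypothesis \eqref{eq:alpha} is precisely to keep $E$ bounded: a short computation shows the two endpoints in \eqref{eq:alpha} satisfy $\lambda_-(1-\lambda_-)=\frac{1}{12n}$ and $\lambda_+(1-\lambda_+)=\frac{1}{12n+1}$, so \eqref{eq:alpha} is equivalent to $\lambda(1-\lambda)n$ being bounded below by a constant of order one, whence $\lambda n$ and $(1-\lambda)n$ each exceed $\tfrac{1}{12}$ and the terms $\frac{1}{12\lambda n},\frac{1}{12(1-\lambda)n}$ are $O(1)$. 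Thus $E=O(1)$, giving \eqref{Eq:nchoosek}; for a fixed $\lambda\in(0,1)$ these corrections are in fact $O(1/n)$, which yields the ``particularly'' clause.

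Next, for \eqref{Eq:nchoosek2} with $\lambda\in(0,\tfrac12)$, I would exploit that the summands $\binom{n}{i}$ increase up to $i=\lambda n$, so the sum is dominated by its top term. Writing the downward ratios $\binom{n}{i-1}/\binom{n}{i}=i/(n-i+1)$, each such ratio for $i\le\lambda n$ is at most $\frac{\lambda n}{(1-\lambda)n+1}=\frac{\lambda}{1-\lambda}(1+o(1))<1$, so the partial sum is squeezed between $\binom{n}{\lambda n}$ and a geometric series with this ratio. Letting $n\to\infty$, the geometric factor converges to $\big(1-\tfrac{\lambda}{1-\lambda}\big)^{-1}=\frac{1-\lambda}{1-2\lambda}$, giving $\sum_{i=0}^{\lambda n}\binom{n}{i}=\binom{n}{\lambda n}\frac{1-\lambda}{1-2\lambda}(1+o(1))$. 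Substituting \eqref{Eq:nchoosek} and simplifying $-\tfrac12\log(2\pi\lambda(1-\lambda))+\log\frac{1-\lambda}{1-2\lambda}=-\tfrac12\log\frac{2\pi\lambda(1-2\lambda)^2}{1-\lambda}$ produces \eqref{Eq:nchoosek2}. Finally, \eqref{Eq:nchoosek3} is the easiest: for fixed $k$ the term $\binom{n}{k}=\frac{n(n-1)\cdots(n-k+1)}{k!}=\frac{n^k}{k!}(1+O(1/n))$ dominates, and since the $(k+1)$-term sum lies between $\binom{n}{k}$ and $(k+1)\binom{n}{k}$, its logarithm is $k\log n+O(1)$.

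I expect the main obstacle to be the quantitative control in \eqref{Eq:nchoosek2}: making rigorous that the geometric tail sums to the claimed constant $\frac{1-\lambda}{1-2\lambda}$ up to a $1+o(1)$ factor (uniformity of the ratio bound over $i\le\lambda n$ and negligibility of the remaining terms), since recovering the precise constant \emph{inside} the logarithm — rather than an anonymous $O(1)$ — is exactly what distinguishes \eqref{Eq:nchoosek2} from the trivial bound. The secondary delicate point is verifying that the Robbins correction $E$ stays $O(1)$ across the whole range \eqref{eq:alpha}, not merely for $\lambda$ bounded away from $0$ and $1$; the identities $\lambda_\pm(1-\lambda_\pm)=\frac{1}{12n},\frac{1}{12n+1}$ show that \eqref{eq:alpha} is tuned exactly to the Robbins error terms and make this step transparent.
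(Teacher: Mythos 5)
Your proposal is correct and follows essentially the same route as the paper: Stirling's formula for the single coefficient (your Robbins bounds are a two-sided version of the paper's correction factor $1+\tfrac{1}{12n}+O(n^{-2})$, and your endpoint identities $\lambda_{\pm}(1-\lambda_{\pm})=\tfrac{1}{12n},\tfrac{1}{12n+1}$ capture exactly the role hypothesis \eqref{eq:alpha} plays in the paper, namely keeping the Stirling correction terms of order one), followed by the identical geometric-ratio squeeze $\binom{n}{\lambda n}\leq \sum_{i=0}^{\lambda n}\binom{n}{i}\leq \binom{n}{\lambda n}\tfrac{1-\lambda}{1-2\lambda}$ for \eqref{Eq:nchoosek2} and the analogous finite-term-count bound for \eqref{Eq:nchoosek3}. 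One small remark: your stated ``main obstacle'' is self-imposed, since \eqref{Eq:nchoosek2} carries an additive $O(1)$ that absorbs any constant between $1$ and $\tfrac{1-\lambda}{1-2\lambda}$, so the two-sided bound already suffices and no $(1+o(1))$ identification of the geometric constant is needed --- the paper simply writes $\sum_{i=0}^{\lambda n}\binom{n}{\lambda n - i}=\binom{n}{\lambda n}\,O(1)$ and substitutes \eqref{Eq:nchoosek}.
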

\begin{proof}
By Stirling's approximation, we have: 
\begin{align} 
\label{eq:sterling}
n! = {\sqrt {2\pi n}}\left({\frac {n}{e}}\right)^{n}\left(1+{\frac {1}{12n}}+O\left(\frac{1}{n^2}\right) \right).
\end{align}
Substituting \eqref{eq:sterling} in ${n \choose n\lambda} = \frac{n!}{(\lambda n)!((1-\lambda)n)!}$ and simplifying, we get:
\begin{align}
    &{n \choose n\lambda} = \frac{1}{\sqrt{2\pi n\lambda(1-\lambda)}}
     \frac{1}{\lambda^{n\lambda}(1-\lambda)^{n(1-\lambda)}}  \frac{1+ \frac{1}{12n} +O\left(\frac{1}{n^2}\right)}{\left(1+ \frac{1}{12n}\left( \frac{1}{1-\lambda} + \frac{1}{\lambda} \right) + O\left(\frac{1}{n^2}\right) \right)}\notag\\
    & \stackrel{\text{(a)}}{=} 
    \frac{1}{\sqrt{2\pi n\lambda(1-\lambda)}}  \frac{1}{\lambda^{n\lambda}(1-\lambda)^{n(1-\lambda)}}
   \left(1+ \frac{1}{12n} + O\left(\frac{1}{n^2}\right)\right) \left(1- \frac{1}{12n\lambda (1-\lambda)} + O\left(\frac{1}{n^2}\right)\right) \notag
    \\& = \frac{1}{\sqrt{2\pi n\lambda(1-\lambda)}}
    \frac{1}{\lambda^{n\lambda}(1-\lambda)^{n(1-\lambda)}}  \left(1+ \frac{1}{12n} \frac{\lambda(1-\lambda) - 1 }{\lambda(1-\lambda)} + O\left(\frac{1}{n^2}\right)\right),
\end{align}
where in (a), we have used the Taylor series expansion $\frac{1}{1+x} = 1 - x + O(x^2), x\in (-1,1)$. Note that \eqref{eq:alpha} ensures that $\frac{1}{12n}\left( \frac{1}{1-\lambda}+\frac{1}{\lambda}\right) \in (-1,1)$. Taking the logarithm:
\begin{align}
\label{eq:logp1}
&\begin{aligned}
    \log{{n \choose n\lambda}}&= 
    \frac{-1}{2}\log{{{2\pi n\lambda(1-\lambda)}}}
    - {n\lambda}\log{\lambda}-{n(1-\lambda)}\log{(1-\lambda)} 
    \\
    &+\log{
    \left(1+ \frac{1}{12n} \frac{\lambda(1-\lambda) - 1 }{\lambda(1-\lambda)} + O\left(\frac{1}{n^2}\right)\right)}.
\end{aligned}
\end{align}
Using the Taylor expansion $\ln{(1+x)}=1-x+O(x^2), x\in (-1,1]$, we have
\begin{align}
\label{eq:logp2}
    \log{
    \left(1+ \frac{1}{12n} \frac{\lambda(1-\lambda) - 1 }{\lambda(1-\lambda)} + O\left(\frac{1}{n^2}\right)\right)} =  1-  \frac{1}{12n} \frac{\lambda(1-\lambda) - 1 }{\lambda(1-\lambda)} + O\left(\frac{1}{n^2}\right).
\end{align}
Note that \eqref{eq:alpha} ensures that $\frac{1}{12n} \frac{\lambda(1-\lambda) - 1 }{\lambda(1-\lambda)} \in (-1,1]$. Substituting \eqref{eq:logp2} in \eqref{eq:logp1} and simplifying, we get \eqref{Eq:nchoosek}. 
To prove \eqref{Eq:nchoosek2}, consider the equality 
\begin{align}
\label{Eq:01}
    {n \choose \lambda n-j}= {n \choose \lambda n}\times
    \frac{\lambda n}{n-\lambda n +1}
    \times
    \frac{\lambda n-1}{n-\lambda n +2}\times 
    \cdots 
    \times
    \frac{\lambda n-j+1}{n-\lambda n +j}, 
    \lambda<\frac{1}{2}, j\in [\lambda n].
\end{align}
Note that
\begin{align}
    \label{Eq:11}
    \frac{\lambda n-i+1 }{n-\lambda n +i}\leq \frac{\lambda n}{n-\lambda n+i}<\frac{\lambda}{1-\lambda},\quad  i\in [\lambda n],
\end{align}
From \eqref{Eq:01} and \eqref{Eq:11}, we have:
\begin{align}
\label{Eq:31}
    \sum_{i=0}^{\lambda n}{n \choose \lambda n-i}\leq {n \choose \lambda n}\left(1+ \frac{\lambda}{1-\lambda}+\left(\frac{\lambda}{1-\lambda}\right)^2+\cdots \right)= {n \choose \lambda n} \frac{1-\lambda}{1-2\lambda}.
\end{align}
On the other hand, $\sum_{i=0}^{\lambda n}{n \choose \lambda n-i}\geq {n \choose \lambda n}$. As a result,
\begin{align}
\label{Eq:32}
    &\sum_{i=0}^{\lambda n}{n \choose \lambda n-i}= {n \choose \lambda n}O(1).
\end{align}
Taking the logarithm of \eqref{Eq:32} and substituting \eqref{Eq:nchoosek} proves \eqref{Eq:nchoosek2}. Equation  \eqref{Eq:nchoosek3} can also be proved in a similar fashion.
\end{proof}

To prove Thm.~\ref{th:PtP}, we only consider the case of non-zero thresholds. The case of zero thresholds is proved in a similar fashion. Denote $\munderbar{\mathbf{x}} \in \mathbb{R}^{\ell n_t}$ as concatenation of $\ell$ consecutive transmit signals, and $\munderbar{\mathbf{y}}$ as the concatenation of the corresponding received signals. We have 
\begin{align}
\label{eq:inout}
    \munderbar{\mathbf{y}} = (\mathbf{H}\otimes \mathbf{I}_{\ell} )\munderbar{\mathbf{x}} + \mathbf{\munderbar{\mathbf{n}}}
\end{align}
where $\mathbf{I}_{\ell}$ is the identity matrix of dimension $\ell \times \ell$ and $\mathbf{\munderbar{\mathbf{n}}} \in \mathbb{R}^{\ell n_r}$ is an independent noise vector.

Since, we consider SNR$\to \infty$, we design a strategy for $\mathbf{\munderbar{\mathbf{n}}} = \mathbf{0}$.
The channel output signal dimension for the noiseless system is $\ell\mathrm{rank}(\mathbf{H})$. We consider the pair $(\mathbf{V}\in \mathbb{R}^{\ell n_q \times \ell n_r},\mathbf{t} \in \mathbb{R}^{\ell n_q})$ which achieve the maximum number of decision regions in Prop.~\ref{Prop:Partition}. Therefore, at high SNR, the receiver can distinguish between ${\cal{R}}(\mathbf{H}\otimes \mathbf{I}_{\ell}, \mathbf{V},\mathbf{t})$ symbols and achieve the rate of $\log|{\cal{R}}(\mathbf{H}\otimes \mathbf{I}_{\ell}, \mathbf{V},\mathbf{t})|$. However, the receiver takes $\ell$ channel-uses for decoding each symbol and so the high SNR rate of the system becomes
\begin{align}
    R^{\mathrm{High~SNR}} &=\frac{1}{\ell} \log{\sum_{i=0}^{\ell\mathrm{rank}(\mathbf{H})} {\ell n_q \choose i}}
\end{align}

Next, we prove the converse result. Note that if we consider the noise vector $\mathbf{\munderbar{\mathbf{n}}}$ in \eqref{eq:inout} the dimension of the received signal space at the receiver becomes $\ell n_r$ instead of $\ell \mathrm{rank}(\mathbf{H})$. Therefore, using Prop.~\ref{Prop:Partition}, the number of regions and thus modulation symbols becomes $\sum_{i=0}^{\ell n_r} {\ell n_q \choose i}$, and we have
\begin{align}
  C^{\mathrm{High~SNR}}_{\ell} &\leq \frac{1}{\ell}\log{\sum_{i=0}^{\ell n_r} {\ell n_q \choose i}}
\end{align}

Using the asymptotic expansions derived in Prop.~\ref{Prop:nchoosek} to simplify the upper bound on the high SNR capacity and the achievable high SNR rate gives us \eqref{eq:linf} and \eqref{eq:nqinf}.  
\end{document}